\newcommand{\mel}{\end{eqnarray*}}
\def\fr{\begin{align*}}
\newcommand{\kl}{\pl \le \pl}
\newcommand{\gl}{\pl \ge \pl}
\newcommand{\lel}{\pl = \pl}
\newcommand{\nz}{{\mathbb N}}
\newcommand{\nen}{n \in \nz}
\newcommand{\rz}{{\mathbb R}}
\newcommand{\zz}{{\mathbb Z}}
\newcommand{\ten}{\otimes}
\newcommand{\pl}{\hspace{.1cm}}
\newcommand{\pll}{\hspace{.3cm}}
\newcommand{\pla}{\hspace{1.5cm}}
\newcommand{\al}{\alpha}
\newcommand{\si}{\sigma}
\newcommand{\la}{\lambda}
\newcommand{\eps}{\varepsilon}
\renewcommand{\phi}{\varphi}
\newcommand{\F}{{\mathcal F}}
\newcommand{\E}{{\mathcal E}}
\newcommand{\B}{{\mathcal B}}
\newcommand{\M}{{\mathcal M}}
\newcommand{\N}{{\mathcal N}}
\newcommand{\norm}[2]{\parallel \! #1 \! \parallel_{#2}}
\newtheorem{lemma}{Lemma}[section]
\newtheorem{prop}[lemma]{Proposition}
\newtheorem{theorem}[lemma]{Theorem}
\newtheorem{cor}[lemma]{Corollary}
\newtheorem{rem}[lemma]{Remark}
\newtheorem{definition}[lemma]{Definition}
\newcommand{\re}{\begin{rem}\rm}
\newcommand{\mar}{\end{rem}}
\newcommand{\qd}{\end{proof}\vspace{0.5ex}}
\newcommand{\prf}{\begin{proof}[\bf Proof:]}
\newcommand{\xspace}{\hbox{\kern-2.5pt}}
\newtheorem*{theorem*}{Theorem}
\definecolor{LightGray}{rgb}{0.94,0.94,0.94}
\definecolor{VeryLightBlue}{rgb}{0.9,0.9,1}
\definecolor{LightBlue}{rgb}{0.8,0.8,1}
\definecolor{DarkBlue}{rgb}{0,0,0.6}
\definecolor{LightGreen}{rgb}{0.88,1,0.88}
\definecolor{MidGreen}{rgb}{0.6,1,0.6}
\definecolor{DarkGreen}{rgb}{0,0.6,0}
\definecolor{DarkGrreen}{rgb}{0,0.8,0}
\definecolor{VeryLightYellow}{rgb}{1,1,0.9}
\definecolor{LightYellow}{rgb}{1,1,0.6}
\definecolor{MidYellow}{rgb}{1,1,0.5}
\definecolor{DarkYellow}{rgb}{0.8,1,0.3}
\definecolor{VeryLightRed}{rgb}{1,0.9,0.9}
\definecolor{LightRed}{rgb}{1,0.8,0.8}
\definecolor{DarkRed}{rgb}{0.8,0.2,0}
\definecolor{DarkRedb}{rgb}{0.6,0.2,0}
\definecolor{DarkLila}{rgb}{0.8,0,1}
\definecolor{Beige}{rgb}{0.96,0.96,0.86}
\definecolor{Gold}{rgb}{1.,0.84,0.}
\definecolor{Goldb}{rgb}{0.7,0.3,0.5}
\definecolor{MyYellow}{rgb}{1.,0.84,0.8}
\DeclareMathOperator{\tr}{tr}
	\newcommand{\RR}{\mathbb{R}}
	\newcommand{\TT}{\mathbb{T}}
	\newcommand{\BB}{\mathbb{B}}
	\newcommand{\NN}{\mathbb{N}}
	\newcommand{\id}{\mathbf{1}}
\begin{document}

\title{Strong converse rate for asymptotic hypothesis testing in type III}

%

\author{Marius Junge}
\address{Marius Junge, University of Illinois, Urbana, IL 61801, USA} 

\author{Nicholas LaRacuente}

\address{Nicholas Laracuente, Indiana University, Bloomington, IN 47405, USA} 
\email{nlaracu@iu.edu}

\thanks{MJ is partially supported by NSF DMS 2247114.}


\footnotetext{Corresponding author: Nicholas LaRacuente, Indiana University, nlaracu@iu.edu}

\begin{abstract}
    We extend from the hyperfinite setting to general von Neumann algebras Mosonyi and Ogawa's (2015) and Mosonyi and Hiai's (2023) results showing the operational interpretation of sandwiched relative R\'enyi entropy in the strong converse of hypothesis testing. The specific task is to distinguish between two quantum states given many copies. We use a reduction method of Haagerup, Junge, and Xu (2010) to approximate relative entropy inequalities in an arbitrary von Neumann algebra by those in finite von Neumann algebras. Within these finite von Neumann algebras, it is possible to approximate densities via finite spectrum operators, after which the quantum method of types reduces them to effectively commuting subalgebras. Generalizing beyond the hyperfinite setting shows that the operational meaning of sandwiched R\'enyi entropy is not restricted to the matrices but is a more fundamental property of quantum information. Furthermore, applicability in general von Neumann algebras opens potential new connections to random matrix theory and the quantum information theory of fundamental physics.
\end{abstract}

\maketitle

\section{Introduction}
Connections between hypothesis testing and entropy have presented since the beginnings of information theory as a field \cite{chernoff_measure_1952, blahut_hypothesis_1974, han_strong_1989, csiszar_generalized_1995}. The quantum version has also attracted much attention \cite{hiai_proper_1991, hayashi_optimal_2002, hayashi_error_2007, audenaert_discriminating_2007, jaksic_quantum_2012, buscemi_information-theoretic_2019}. Recall the hypothesis testing problem for state discrimination: given some number of copies of an unknown state $\omega$ with the promise that $\omega \in \{\rho, \eta\}$, one may construct tests, two-outcome measurements to distinguish $\omega$. 

We may interpret a state $\rho$ (within finite dimension, a density) as a functional on observables - within finite dimension, $\rho(T) = \tr(\rho T)$ for a state $\rho$ and observable $T$. However, the expression $\rho(T)$ remains meaningful even in infinite-dimensional von Neumann algebras without a finite trace. More specifically , we consider a setting in which one is given many copies of a quantum state. The set of $n$-copy tests in a von Neumann algebra $\N$ is denoted by $\TT_n := \{T_n | 0 \leq T_n \leq 1\} \subset \N^{\otimes n}$. As a two-outcome POVM, the outcome probabilities are given on normal states by $\omega^{\otimes n} \mapsto (\omega^{\otimes n}(T_n), 1 - \omega^{\otimes n}(T_n))$. Given $n$ copies of an unknown state that is either $\rho$ or $\eta$, the error probability of the first kind or type I error probability is given by
\[ \alpha_n(T_n) := \rho^{\otimes n} (\id^{\otimes n} - T_n) \pl, \]
representing the probability that the state is mistakenly identified as $\rho$ when it is actually $\eta$. The error probability of the second kind or type II error probability is given by
\[ \beta_n(T_n) := \eta^{\otimes n} (T_n) \pl, \]
representing the probability that the state is mistakenly identified as $\eta$ when it is actually $\rho$. 
The $n$th minimum type I error probability of Hoeffding type is defined \cite{hiai_quantum_2023} as
\[ \alpha_{e^{-n r}}^*(\rho^{\otimes n} \| \eta^{\otimes n}) := \min_{0 \leq T_n \leq 1} \big \{ \rho^{\otimes n}(\id - T_n) : \eta^{\otimes n}(T_n) \leq e^{- n r} \big \} \pl, \]
where the minimum is taken in all tests in $\M^{\bar{\otimes} n}$ with $\eta^{\otimes n}(T_n) \leq e^{- n r}$. Correspondingly,
\begin{equation} \label{eq:1minusalpha}
    1 - \alpha_{e^{-n r}}^*(\rho^{\otimes n} \| \eta^{\otimes n}) = \max_{0 \leq T_n \leq 1} \big \{ \rho^{\otimes n}(T_n) : \eta^{\otimes n}(T_n) \leq e^{- n r} \big \}
\end{equation}
is the $n$th maximum type I success probability. To interpret the Hoeffding error/success probabilities, note that one can always avoid the possibility of one kind of error, trivally by always guessing one state or the other. With many copies, one may soften this restriction to require merely that the probability of a type II error decreases exponentially with the number of copies. In the many-copy limit or Shannon regime, the probability of the type I error may converge to zero while the probability of a type II error approaches a fixed value. The parameter $r$ in \eqref{eq:1minusalpha} bounds the rate at which errors of the second kind are required to decrease, while the value of the left-hand side given the best possible type I error probability.. A closely related quantity is
\begin{align} \label{eq:Br}
    B_r(\rho \| \eta) := \inf \Big \{ R | \exists \{T_n\}_{n=1}^\infty, 0 \leq T_n \leq 1
    , \limsup_{n \rightarrow \infty} \eta^{\otimes n} (T_n) \leq e^{- r n}, 
      \liminf_{n \rightarrow \infty} \rho^{\otimes n} (T_n) \geq e^{-R n} \} \pl,
\end{align}
which was defined in \cite[Equation (43)]{mosonyi_quantum_2015} and denoted ``$B_e^*(r)$'' therein. $B_r(\rho \| \eta)$ is described as a strong converse exponent - it is the rate by which the type I error probability converges exponentially to 1 when the type II error probability is forced to be too small. Within the finite-dimensional setting, it was observed that
\begin{equation} \label{eq:bralpha}
    \begin{split}
    B_r(\rho\|\eta) 
    & = - \lim_{n \rightarrow \infty} \frac{1}{n} \log \Big \{ 1 - \alpha_{e^{-nr}}^* (\rho^{\otimes n} \| \eta^{\otimes n}) \Big \} \pl.
    \end{split}
\end{equation}
In general von Neumann algebras, it is no longer trivial that the limit exists. We confirm this inequality generalizes as Proposition \ref{prop:Br-alpha}.

The sandwiched R\'enyi relative entropy \cite{wilde_strong_2014, muller-lennert_quantum_2013} is given by
\[ D_\alpha^*(\rho \| \eta) = \| \rho^* \rho \|_{L^{1/2}_\alpha(\eta)} \pl, \]
where the norm is in the Kosaki $L_\alpha^{1/2}$ space weighted by $\eta$ \cite{kosaki_applications_1984, gao_capacity_2018}. The Hoeffding anti-divergence with respect to the sandwiched R\'enyi relative entropy is given for a pair of states $\rho, \eta$ by
\begin{equation} \label{eq:Hr}
    H_r^*(\rho \| \eta) = \sup_{\alpha > 1} \frac{\alpha - 1}{\alpha} \Big \{ r - D_\alpha^*(\rho \| \eta) \Big \}  \pl.
\end{equation}
The main result of this paper is then:
\begin{theorem} \label{thm:main}
    For any $r \in \RR^+$ and states $\rho, \eta \in \M_*$ on any von Neumann algebra $\M$ such that $a \rho \leq \eta$ for some $a > 0$,
    \[B_r(\rho\|\eta) = H_r^*(\rho \| \eta) \pl. \]
\end{theorem}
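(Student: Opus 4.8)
The plan is to establish the two inequalities $B_r(\rho\|\eta)\ge H_r^*(\rho\|\eta)$ and $B_r(\rho\|\eta)\le H_r^*(\rho\|\eta)$ separately, with only the second requiring the reduction machinery described in the introduction. For $B_r\ge H_r^*$ I would argue in the given von Neumann algebra $\M$ directly, using only monotonicity and additivity of the sandwiched R\'enyi divergence. Let $(T_n)$ be any sequence of tests with $\limsup_n \tfrac1n\log\eta^{\otimes n}(T_n)\le -r$, let $R$ be a rate with $\liminf_n \tfrac1n\log\rho^{\otimes n}(T_n)\ge -R$, and let $\Phi_n$ be the binary measurement channel $\psi\mapsto(\psi(T_n),\psi(\id-T_n))$ with values in $\ell_\infty^2$. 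Then for $\alpha>1$,
\begin{align*}
n\,D_\alpha^*(\rho\|\eta) &= D_\alpha^*(\rho^{\otimes n}\|\eta^{\otimes n}) \ge D_\alpha^*(\Phi_n(\rho^{\otimes n})\|\Phi_n(\eta^{\otimes n})) \\
&\ge \tfrac{\alpha}{\alpha-1}\log\rho^{\otimes n}(T_n)-\log\eta^{\otimes n}(T_n),
\end{align*}
where the last step retains only the first measurement outcome and uses $\alpha>1$. Dividing by $n$ and passing to the limit gives $D_\alpha^*(\rho\|\eta)\ge r-\tfrac{\alpha}{\alpha-1}R$, i.e.\ $R\ge\tfrac{\alpha-1}{\alpha}\big(r-D_\alpha^*(\rho\|\eta)\big)$; taking the supremum over $\alpha>1$ and then the infimum over admissible $R$ yields $B_r\ge H_r^*$. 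Here $a\rho\le\eta$ enters only to guarantee that every $D_\alpha^*(\rho\|\eta)$, hence $H_r^*$, is finite.

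For the reverse inequality $B_r\le H_r^*$, fix $\epsilon>0$; the target is a sequence of tests with $\eta^{\otimes n}(T_n)\le e^{-rn}$ and $\rho^{\otimes n}(T_n)\ge e^{-(H_r^*+\epsilon)n}$ for large $n$. Since $a\rho\le\eta$ forces $D_\alpha^*(\rho\|\eta)\le\log(1/a)$ for all $\alpha>1$, the supremum defining $H_r^*$ is effectively over a bounded interval of $\alpha$, and $H_r^*$ depends continuously on $(\rho,\eta)$ for the topology governing the Kosaki norms. I would then apply the Haagerup--Junge--Xu reduction to the finitely many coordinates entering $\|\rho^*\rho\|_{L^{1/2}_\alpha(\eta)}$ and their $n$-fold amplifications, obtaining a \emph{finite} von Neumann algebra $\N$ with faithful normal trace $\tau$ carrying states $\rho',\eta'$ with $a'\rho'\le\eta'$, such that $D_\alpha^*(\rho'\|\eta')$ stays within $\epsilon$ of $D_\alpha^*(\rho\|\eta)$ uniformly on the relevant $\alpha$-interval and such that tests in $\N^{\bar{\otimes}n}$ transfer to tests in $\M^{\bar{\otimes}n}$ at the cost of only subexponential perturbations of $\rho^{\otimes n}(\cdot)$ and $\eta^{\otimes n}(\cdot)$. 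It thus suffices to produce the tests inside $\N$.

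Inside $\N$, write $\rho',\eta'$ as densities in $L_1(\N,\tau)$ and use a spectral truncation to replace them by densities $\rho'',\eta''$ of finite spectrum with $a''\rho''\le\eta''$, moving every R\'enyi quantity by at most $\epsilon$. Since $\rho''$ and $\eta''$ then have $O(1)$ eigenvalues, $\eta''^{\otimes n}$ has only $n^{O(1)}$ distinct eigenvalues, so the pinching $\Pi_n$ onto its eigenprojections obeys $\Pi_n(\rho''^{\otimes n})\ge n^{-O(1)}\,\rho''^{\otimes n}$ and commutes with $\eta''^{\otimes n}$; this is the quantum method of types, reducing the $n$-copy problem to a classical binary hypothesis test between the induced type distributions on the abelian range of $\Pi_n$. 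There the Neyman--Pearson threshold test on the log-likelihood ratio attains the classical Hoeffding bound, and because pinching is asymptotically lossless for the sandwiched R\'enyi divergence (the $n^{-O(1)}$ factor being subexponential), the resulting exponent converges to $\sup_{\alpha>1}\tfrac{\alpha-1}{\alpha}\big(r-D_\alpha^*(\rho''\|\eta'')\big)=H_r^*(\rho''\|\eta'')$. Lifting the threshold test back through $\Pi_n$ --- its fixed-point algebra contains the test, so $\rho''^{\otimes n}$ and $\eta''^{\otimes n}$ evaluate on it exactly --- gives the required tests in $\N^{\bar{\otimes}n}$; undoing the truncation and the reduction and letting $\epsilon\to0$ finishes the proof. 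Proposition~\ref{prop:Br-alpha} then lets one restate the conclusion in terms of $\alpha^*_{e^{-nr}}$ and the existence of the limit in \eqref{eq:bralpha}.

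The principal obstacle is running the Haagerup--Junge--Xu reduction compatibly with \emph{both} the tensor-power structure and the supremum over $\alpha$: the approximating finite pair must control $D_\alpha^*$ simultaneously at every relevant $\alpha$ and at every tensor power $n$, and the two-way transfer of tests must not accumulate error that survives exponential rescaling as $n\to\infty$. This is precisely where the hypothesis $a\rho\le\eta$ does the work, bounding all $D_\alpha^*(\rho\|\eta)$ uniformly and compactifying the effective range of $\alpha$, thereby restoring the continuity and uniformity that are automatic in finite dimension. By contrast, the spectral truncation, the pinching estimates, and the classical achievability step are essentially formal once the densities have finite spectrum.
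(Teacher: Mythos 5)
Your overall architecture is the same as the paper's: the easy direction via data processing of $D_\alpha^*$ under the two-outcome measurement (the paper's Lemma \ref{lem:converse}, following \cite{mosonyi_quantum_2015, jencova_renyi_2018}), and the hard direction by compactifying the range of $\alpha$ using $a\rho\le\eta$ (Lemma \ref{lem:compact1}), Haagerup reduction to finite von Neumann algebras with uniform control of $Q^*_\alpha$ on a compact $\alpha$-interval (Lemmas \ref{lem:haagerupconv} and \ref{lem:haageruphoeffding}), finite-spectrum approximation, pinching/method of types onto a commuting algebra (Lemmas \ref{lem:finitespec} and \ref{lem:types}), achievability in the resulting abelian setting, and transfer of the tests back to $\M$ via the reverse data-processing property of $B_r$ (Lemma \ref{lem:bdp}). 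Two harmless deviations: you invoke the classical Neyman--Pearson/G\"artner--Ellis achievability on the pinched pair directly, where the paper instead applies the hyperfinite Theorem \ref{thm:origres} at a fixed block size and re-blocks via Lemma \ref{lem:subseq}; and your claim that the Haagerup-level transfer of tests costs ``subexponential perturbations'' undersells it --- since the conditional expectations are $\eta$-preserving and tests pull back through them, the transfer is exact, which is exactly why Lemma \ref{lem:bdp} needs no error term.

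The one place where your sketch, as written, would not go through is the finite-spectrum step applied to \emph{both} densities. You truncate $\rho'$ to $\rho''$ and later say ``undoing the truncation \dots finishes the proof,'' but no mechanism is given: tests constructed to be good for $\rho''^{\otimes n}$ control $\rho'^{\otimes n}$ only if you have an operator comparison of the form $\rho''\le e^{\epsilon}\rho'$, and a two-sided multiplicative finite-spectrum approximation of $d_{\rho'}$ need not exist because its spectrum may accumulate at $0$ (only $d_{\eta}$ is bounded above and below in the Haagerup finite algebras, per Theorem \ref{thm:haagerup}(iv), and that boundedness is precisely what Lemma \ref{lem:finitespec} exploits). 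The paper avoids this entirely: only $\eta$'s density is replaced by a finite-spectrum $F_{k,l}(d_\eta)$, the resulting swap in the second argument of $B_r$ is controlled by the order inequality $d_\eta\le(1+1/k)F_{k,l}(d_\eta)$ through Lemma \ref{lem:one-replace}, and $\rho$ is never approximated --- it is only pinched, and since the achieving tests lie in the pinched (commutant) algebra, $\rho^{\otimes n}$ evaluates on them exactly. Your argument is repaired by simply dropping the truncation of $\rho$ (the method of types needs finite spectrum only for $\eta$) and adding the order-perturbation step for $\eta$; with that modification it coincides with the paper's proof.
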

The finite-dimensional case of Theorem \ref{thm:main} was shown by Mosonyi and Ogawa as \cite[Theorem 4.10]{mosonyi_quantum_2015}. Mosonyi presented an infinite-dimensional generalization in \cite{mosonyi_strong_2023} for trace class or density operators. Hyperfinite von Neumann algebras are limits of finite-dimensional matrix algebras. The hyperfinite case of \ref{thm:main} was shown by Mosonyi and Hiai as \cite[Theorem 3.9]{hiai_quantum_2023}. These follow classical results of Han and Kobayashi \cite{han_strong_1989} and Csisz\'ar \cite{csiszar_generalized_1995}. A later work \cite{cheng_strong_2024} showed an improved, simplified version of one direction of the inequality (in this work, \ref{lem:converse}), that $B_r(\rho \| \eta) \geq H_r^*(\rho \| \eta)$. The majority of the current work is on showing the other direction. It had remained open to generalize the original \cite[Theorem 4.10]{mosonyi_quantum_2015} to non-hyperfinite type III.

A primary implication of Theorem \ref{thm:main} is an operational interpretation of the sandwiched relative R\'enyi entropies beyond the setting of matrix algebras or their limits. This is highlighted via Corollary \ref{cor:cutoff}, showing equivalence to the generalized cutoff rates as in \cite{csiszar_generalized_1995}. A motivating issue for earlier works was that despite their good mathematical properties, the sandwiched R\'enyi entropies initially lacked a correspondence to an operational task in information theory. Hence a major theme of \cite{mosonyi_quantum_2015} and of follow-up works \cite{mosonyi_strong_2023, hiai_quantum_2023} is that via strong converse rates and cutoff rates in hypothesis testing, the sandwiched relative R\'enyi entropies are indeed operational. A major application of quantum information theory today is to quantum field theory and related areas of fundamental physics \cite{witten_aps_2018}. Though many of these settings are hyperfinite, non-hyperfiniteness appears to be a fundamental distinction and meaningful in the theory of computation and quantum correlations \cite{ji_mip*re_2022}. An wider, analogous question is if and where the hyperfinite vs. non-hyperfinite distinction appears in information theory. In this particular case, we show that it does not.

A broader intuition from our methods combines the continuity properties of $H_r^*$ with the monotonicity properties of $B_r$ to obtain desired results. The Haagerup reduction used herein approximates general von Neumann algebras not by finite-dimensional matrices, but by infinite-dimensional algebras that still have a finite trace. Although $B_r$ is not obviously continuous, we note as Lemma \ref{lem:bdp} that it obeys a sort of reverse data processing. Therefore, successive approximations in smaller algebras only increase this quantity.

\section{Background}
Here we briefly recall the notation of von Neumann algebras and Kosaki spaces as well as the relative entropy and generalized divergences. For more detail, we refer the reader to Mosonyi and Hiai's recent work in hyperfinite algebras \cite{hiai_quantum_2023}. One may also see earlier references on von Neumann algebra types \cite{araki_classification_1968} and their physical interpretations \cite{witten_aps_2018}. In particular, we recall that a von Neumann algbera of type $II$ or $III$ is infinite-dimensional and not necessarily approximated by finite-dimensional matrix algebras \cite{ji_mip*re_2022}. An algebra of type $II_1$ does have a finite trace, despite not necessarily being a limit of matrix algebras. An algebra of type $III$ lacks any notion of a finite trace, although some quantum information notions such as relative entropies are still well-defined and well-behaved.

We first recall the crossed product:
\begin{definition}
Let $G$ be a locally compact abelian group equipped with Haar measure $dg$, and
$\hat G$ its dual group equipped with Haar measure $d\hat g$. We choose
$dg$ and $d\hat g$ so that the Fourier inversion theorem holds. Let
$\alpha$ be a continuous automorphic representation of $G$ on
$\M$. The crossed product $\M \rtimes_\alpha G$ is the von Neumann
algebra on $L_2(G, H)$ generated by the operators $\pi_\alpha(x)$,
$x\in \M$ and $\lambda(g)$, $g\in G$, which are defined by
 $$\big(\pi_{\alpha} (x) \xi \big) (h)=\alpha^{-1}_{h}(x)
 \xi (h),\quad
 \big(\lambda(g) \xi\big) (h) = \xi (h-g), \quad  \xi\in L_2(G, H),\;
 h\in G.$$
\end{definition}
In particular, $\M \rtimes_\alpha \RR$ is of type $II_\infty$ even if $\M$ was of type $III$. As it has a semifinite trace, $\M \rtimes_\alpha \RR$ supports the Haagerup $L_p$ spaces as analogues of the finite-dimensional Schatten classes. For a faithful state $\eta$, by $L_p^{1/2}(\eta)$ we denote the Kosaki $L_p$, the completion of $\N$ with norm
\[ \| \eta^{1/2} X \eta^{1/2} \|_{L_p^{1/2}(\eta)} = \| \eta^{ 1/2p} X \eta^{1/2p} \|_{L_p} \pl,  \pl X\in \N \]
where $L_p$ is the associated Haagerup $L_p$ space \cite{kosaki_applications_1984, haagerup_reduction_2010} (or $L_p(N,tr)$ if $\N$ admits a normal faithful trace $tr$). In von Neumann algebras with finite trace, we may take the usual trace and Schatten norms. In von Neumann algebras that lack even a semifinite trace, the Kosaki $L_p$ and Haagerup norms are still valid and allow us to extend the definitions of relative entropies and related quantities \cite{gao_capacity_2018, junge_multivariate_2022}. Following \cite{furuya_monotonic_2023}, we recall a generalized, parameterized, pre-logarithm $f$-divergence
\begin{equation}
    Q^f_r(\rho \| \eta) := \| \eta^{1/2} f(\Delta_{\rho | \eta}^{1/r})^{1/2*} f(\Delta_{\rho | \eta}^{1/r})^{1/2} \eta^{1/2}  \|_{L^{1/2}_r(\eta)} \pl. 
\end{equation}
The corresponding entropy expression is
\[ D^f_r(\rho \| \eta) := - 2 r \ln Q^f_r(\rho \| \eta) \pl. \]
In finite dimension, the density matrix is naturally and trivially identified with a corresponding quantum state. In a semifinite von Neumann algebra $\M$, one may also associate via an intervible mapping a state $\rho \in \M_*$ with a density operator $d_\rho$ for which $\tr(d_\rho X) = \rho(X)$ for all $X$. In semifinite von Neumann algebras, recall the function
 \[ g_{\rho,\eta}(z) := d_{\rho}^{z} d_{\eta}^{-z} \pl, \]
where $d_\rho$ and $d_\eta$ are respectively the densities corresponding to states $\rho$ and $\eta$. 
The generalized $f$-divergence may equivalently be expressed as
\begin{equation} \label{eq:semifinitefdiverge}
    Q^f_r(\rho \| \eta) = \| f(g_{\rho | \eta}^{1/r})^{1/2*} f(g_{\rho | \eta}^{1/r})^{1/2}  \|_{L^{1/2}_r(\eta)} \pl.
\end{equation}
Since the trace is necessarily infinite on all elements of a type III von Neumann algebra, the usual notion of Shannon or von Neumann entropy is never finite. The quantum relative entropy and generalized $f$-divergences, however, naturally generalize the notion of entropy to general von Neumann algebras.

Within $f$-divergences, our focus is on divergences of the form $Q_{x \mapsto x^{q}, r}$. Cases of interest include $q = r = p$, yielding the sandwiched relative R\'enyi $p$-entropies, and $r = 1, q = 1/p$, yielding the Petz-R\'enyi relative $p$-entropies. In either of these cases, $g_{\rho | \eta}^{q / 2 r *}$ is a holomorphic function of $p \in [1, \infty)$ assuming Equation \eqref{eq:ordassump}. Hence $g_{\rho | \eta}^{q / 2 r *} g_{\rho | \eta}^{q / 2 r}$ is holomorphic in $p$. This work primarily uses the sandwiched relative entropies.
\begin{rem} \label{rem:sandwich}
Although the modular operator is not assured to be holomorphic in type $III$, the sandwiched R\'enyi entropy $D_\alpha^*$ arises from setting
\begin{equation}
    Q^*_{p} =  \| \eta^{1/2} \Delta_{\rho | \eta}^{1/2*} \Delta_{\rho | \eta}^{1/2} \eta^{1/2} \|_{L^{1/2}_p(\eta)} \pl.
\end{equation}
Here we may interpret $\Delta_{\rho | \eta}^{1/2} \eta^{1/2} = \Delta_{\rho | \eta}^{1/2} (\eta^{1/2})$ and $\Delta_{\rho | \eta}^{1/2*}$ as applying to the $\eta^{1/2}$ on its left side. The sandwiched relative R\'enyi entropy is known to obey a data processing inequality \cite{jencova_renyi_2018}. Furthermore, both $Q_\alpha^*(\rho \| \eta)$ and $D_\alpha^*(\rho \| \eta)$ are monotonically non-decreasing in $\alpha$ for every pair of states $\rho$ and $\eta$. When $\rho \leq C \eta$, there exists a $\nu$ for which $\rho^{1/2} = \nu \eta^{1/2}$. One may thereby also write the sandwiched relative entropy as $\| \eta^{1/2} \nu^* \nu \eta^{1/2}\|_{L^{1/2}_p(\eta)}$. In these expressions, the $p$-dependence is relegated to the norm weighting, which is analaytic for $p \in (0,\infty)$.
\end{rem}
The generalized Hoeffding anti-divergence is defined for $r \in \RR$ by
\begin{equation} \label{eq:hoeffding1}
    H_r^f (\rho \| \eta) := \sup_{\alpha > 1} \frac{\alpha - 1}{\alpha} \Big \{ r - D_\alpha^f(\rho \| \eta) \Big \} \pl.
\end{equation}
\begin{rem} \label{rem:hoeffdingequiv}
    The Hoeffding anti-divergence with respect to the Sandwiched relative entropy, which we denote $H^*_r$, is given equivalently by
    \[ H^*_r(\rho \| \eta) = \sup_{\alpha > 1} \frac{\alpha-1}{\alpha} r - \ln Q^*_\alpha (\rho \| \eta) \pl.\]
\end{rem}
\noindent A hyperfinite von Neumann algebra contains an ascending sequence of finite dimensional subalgebras with union that is dense in the original algebra. While hyperfinite factors have many important applications in operator algebras, mathematical physics, and quantum field theory \cite{araki_classification_1968, witten_aps_2018}, evidence has emerged that hyperfinite von Neumann algebras cannot always approximate general von Neumann algebras \cite{ji_mip*re_2022}. We recall a Theorem of Mosonyi and Hiai \cite[Theorem 3.9]{hiai_quantum_2023} showing the desired result in hyperfinite von Neumann algebras:
\begin{theorem}[Mosonyi-Hiai] \label{thm:origres}
    Assume that $\M$ is an injective (hyperfinite) von Neumann algebra. Let $\rho, \eta \in \M_*^+$ be states such that $D_{\alpha_0}^*(\rho \| \eta) < +\infty$ for some $\alpha_0 > 1$. Then
    \[ \lim_{n \rightarrow \infty} - \frac{1}{n} \log \Big \{ 1 - \alpha_{e^{-nr}}^* (\rho^{\otimes n} \| \eta^{\otimes n}) \Big \} = H_r^*(\rho \| \eta) \pl. \]
\end{theorem}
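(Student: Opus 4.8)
\smallskip
\noindent\emph{Proof sketch.}
Set $a_n := -\tfrac{1}{n}\log\{1-\alpha_{e^{-nr}}^*(\rho^{\otimes n}\|\eta^{\otimes n})\}$. The plan is to prove
\[
H_r^*(\rho\|\eta)\;\le\;\liminf_{n\to\infty} a_n\;\le\;\limsup_{n\to\infty} a_n\;\le\;H_r^*(\rho\|\eta),
\]
which forces the limit to exist and equal $H_r^*(\rho\|\eta)$. (If $\rho=\eta$ the statement follows by a direct computation, so we assume $\rho\ne\eta$.) The first inequality holds in any von Neumann algebra and uses no hyperfiniteness: given a test $0\le T_n\le 1$ with $\eta^{\otimes n}(T_n)\le e^{-nr}$, apply the data-processing inequality for $D_\alpha^*$ to the binary measurement channel associated to $\{T_n,\id^{\otimes n}-T_n\}$, use the additivity $D_\alpha^*(\rho^{\otimes n}\|\eta^{\otimes n})=nD_\alpha^*(\rho\|\eta)$, bound the resulting commutative divergence from below, and optimize over $\alpha>1$ to get $\rho^{\otimes n}(T_n)\le e^{-nH_r^*(\rho\|\eta)}$, hence $1-\alpha_{e^{-nr}}^*(\rho^{\otimes n}\|\eta^{\otimes n})\le e^{-nH_r^*(\rho\|\eta)}$ for every $n$. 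This is Lemma \ref{lem:converse} (cf.\ \cite{cheng_strong_2024}), which I would invoke verbatim.

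For the other inequality I would exploit hyperfiniteness to reduce to the finite-dimensional theorem of Mosonyi and Ogawa, \cite[Theorem 4.10]{mosonyi_quantum_2015}. Using the ascending sequence $\M_1\subset\M_2\subset\cdots$ of finite-dimensional $*$-subalgebras with weak-$*$ dense union afforded by hyperfiniteness (recalled above), put $\rho_k:=\rho|_{\M_k}$, $\eta_k:=\eta|_{\M_k}$; plain restriction is all that is needed, no conditional expectation. Because $\M_k^{\bar{\otimes} n}\subset\M^{\bar{\otimes} n}$ and every test feasible for the $\M_k$-problem is feasible for the $\M$-problem with the same objective value, $1-\alpha_{e^{-nr}}^*(\rho^{\otimes n}\|\eta^{\otimes n})\ge 1-\alpha_{e^{-nr}}^*(\rho_k^{\otimes n}\|\eta_k^{\otimes n})$, so $a_n\le -\tfrac{1}{n}\log\{1-\alpha_{e^{-nr}}^*(\rho_k^{\otimes n}\|\eta_k^{\otimes n})\}$. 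Meanwhile data processing of $D_\alpha^*$ under restriction gives $D_\alpha^*(\rho_k\|\eta_k)\le D_\alpha^*(\rho\|\eta)$, so in particular $D_{\alpha_0}^*(\rho_k\|\eta_k)<+\infty$, and \cite[Theorem 4.10]{mosonyi_quantum_2015} applies on the finite-dimensional $\M_k$: the right-hand side above converges, as $n\to\infty$, to $H_r^*(\rho_k\|\eta_k)$. Taking $\limsup_n$ and then the infimum over $k$ yields $\limsup_n a_n\le\inf_k H_r^*(\rho_k\|\eta_k)$.

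It then remains to prove $\inf_k H_r^*(\rho_k\|\eta_k)=H_r^*(\rho\|\eta)$. One direction is once more data processing: $D_\alpha^*(\rho_k\|\eta_k)\le D_\alpha^*(\rho\|\eta)$ for all $\alpha$ forces $H_r^*(\rho_k\|\eta_k)\ge H_r^*(\rho\|\eta)$. For the reverse I would use martingale convergence of the sandwiched R\'enyi divergences along the increasing filtration with dense union: for each fixed $\alpha\in(1,\infty)$, $D_\alpha^*(\rho_k\|\eta_k)\uparrow D_\alpha^*(\rho\|\eta)$ --- lower semicontinuity in the normal topology plus the upward-filtered structure, which is the only place $\overline{\bigcup_k\M_k}=\M$ enters. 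Then $g_k(\alpha):=\tfrac{\alpha-1}{\alpha}\{r-D_\alpha^*(\rho_k\|\eta_k)\}$ decreases pointwise to $g(\alpha):=\tfrac{\alpha-1}{\alpha}\{r-D_\alpha^*(\rho\|\eta)\}$, and I must show the (non-formal) identity $\inf_k\sup_{\alpha>1}g_k(\alpha)=\sup_{\alpha>1}g(\alpha)$. This I would obtain by localizing the suprema: using $D_\alpha^*\ge 0$, the monotonicity of $\alpha\mapsto D_\alpha^*$, the finiteness of $D_{\alpha_0}^*$, and $D_\alpha^*(\rho\|\eta)\to D_\infty^*(\rho\|\eta)>0$ as $\alpha\to\infty$ (using $\rho\ne\eta$), the suprema defining $H_r^*(\rho_k\|\eta_k)$ all stay on a fixed compact subinterval of $(1,\infty)$, on which $g_k\downarrow g$ uniformly by a Dini argument, giving the interchange. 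Equivalently, this is the statement that $H_r^*$ is continuous under increasing filtrations, presumably packaged as a lemma via Remark \ref{rem:sandwich} and the Kosaki $L_p$-space description of $D_\alpha^*$; the whole reduction can also be run through $B_r$ via Proposition \ref{prop:Br-alpha} and the reverse data processing of Lemma \ref{lem:bdp}. Chaining the three bounds proves the theorem.

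The main obstacle I anticipate is exactly this last step: promoting the (essentially classical) martingale convergence $D_\alpha^*(\rho_k\|\eta_k)\uparrow D_\alpha^*(\rho\|\eta)$ to convergence $H_r^*(\rho_k\|\eta_k)\downarrow H_r^*(\rho\|\eta)$ of the Hoeffding anti-divergences, since the required interchange of $\inf_k$ with $\sup_{\alpha>1}$ hinges on uniform-in-$k$ control of the optimizing $\alpha$ and hence on quantitative regularity of $\alpha\mapsto D_\alpha^*$ in the general von Neumann algebra setting.
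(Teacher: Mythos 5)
The paper does not prove Theorem \ref{thm:origres} at all: it is imported verbatim as background from Mosonyi and Hiai \cite[Theorem 3.9]{hiai_quantum_2023}, and its only role here is as an input (applied to the commutative algebra produced by the method of types) in the proof of Lemma \ref{lem:testseq}. So there is no in-paper proof to compare against; what you have written is a reconstruction of the cited result. Your outline is essentially the strategy of that cited work: the ``easy'' direction by data processing of $D_\alpha^*$ through the binary test channel (this is exactly Lemma \ref{lem:converse} and needs no hyperfiniteness), and the ``hard'' direction by restricting to an increasing net of finite-dimensional subalgebras, invoking the finite-dimensional theorem of Mosonyi--Ogawa \cite[Theorem 4.10]{mosonyi_quantum_2015} on each, and then passing to the limit via martingale convergence of the sandwiched divergences. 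Your instinct to use plain restriction rather than conditional expectations is also the right one, since $\eta$-preserving conditional expectations onto arbitrary finite-dimensional subalgebras generally do not exist in type III.

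Two points in your sketch are genuinely incomplete rather than routine. First, the monotone convergence $D_\alpha^*(\rho_k\|\eta_k)\uparrow D_\alpha^*(\rho\|\eta)$ along an increasing net with dense union is a nontrivial theorem (Jen\v{c}ov\'a; see also \cite{jencova_renyi_2018} and Hiai's monograph), not a consequence of ``lower semicontinuity in the normal topology'' as stated --- the $\rho_k$ live on different algebras, so semicontinuity on a fixed predual does not directly apply; you should cite the martingale convergence theorem explicitly. Second, your localization claim that the optimizing $\alpha$ ``stays on a fixed compact subinterval of $(1,\infty)$'' is false when $r\ge D_\infty^*(\rho\|\eta)$: the supremum may only be approached as $\alpha\to\infty$. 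The correct fix is the $\epsilon$-localization uniform in $k$, namely $\sup_{\alpha>\alpha_1} \frac{\alpha-1}{\alpha}\{r-D_\alpha^*(\rho_k\|\eta_k)\}\le \sup_{1<\alpha\le\alpha_1}\frac{\alpha-1}{\alpha}\{r-D_\alpha^*(\rho_k\|\eta_k)\}+\frac{r}{\alpha_1}$, which follows from monotonicity of $D_\alpha^*$ in $\alpha$ and $\frac{\alpha-1}{\alpha}\le 1$; this is exactly the device the paper uses for its own main theorem in Lemma \ref{lem:compact1}. With that in hand, the interchange of $\inf_k$ and $\sup_\alpha$ goes through on a compact $\alpha$-interval (use the upper-semicontinuous variant of Dini, since under the hypothesis $D_{\alpha_0}^*<\infty$ for a single $\alpha_0>1$ the limit function may be $-\infty$ beyond $\alpha_0$). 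With those two repairs, your proposal is a correct proof sketch of the cited theorem, though for the purposes of this paper a citation is what is intended.
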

\noindent This connection to the Holevo type I error probability yields an operational interpretation of the sandwiched R\'enyi entropies as strong converse rates in hypothesis testing. The main goal of the current manuscript is to remove the requirement of injectivity or hyperfiniteness.

As a technical tool, recall Stein's interpolation theorem on compatible Banach spaces:
\begin{theorem}[Stein's Interpolation, \cite{bergh_interpolation_1976}]\label{thm:stein}
Let $(A_0,A_1)$ and $(B_0,B_1)$ be two couples of Banach spaces that are each compatible. Let $\{T_z| z\in S\}\subset \BB(A_0+A_1, B_0+B_1)$ be a bounded analytic family of maps such that
\[\{T_{it}|\pl t\in \mathbb{R}\}\subset \BB(A_0,B_0)\pl ,\pl \{T_{1+it}|\pl t\in \mathbb{R}\}\subset  \BB(A_1,B_1)\pl.\]
Suppose $\Lambda_0=\sup_t{\norm{T_{it}}{\B(A_0,B_0)}}$ and  $\Lambda_1=\sup_t{\norm{T_{1+it}}{\B(A_1,B_1)}}$ are both finite, then for $0< \theta < 1$, $T_\theta$ is a bounded linear map from $(A_0,A_1)_\theta$ to $(B_0,B_1)_\theta$ and
\[\norm{T_\theta}{\BB((A_0,A_1)_\theta ,(B_0,B_1)_\theta)}\le \Lambda_0^{1-\theta}\Lambda_1^{\theta}\pl .\]
\end{theorem}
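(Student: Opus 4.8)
The plan is to prove this by the classical Hadamard three-lines argument applied pointwise, as in Bergh--L\"ofstr\"om. Fix $0<\theta<1$ and take $a \in (A_0, A_1)_\theta$ of norm at most $1$. By definition of the complex interpolation norm, for any $\eps>0$ there is a function $F \in \F(A_0, A_1)$, i.e. $F$ bounded continuous on the closed strip $\bar S = \{0 \le \Re z \le 1\}$, analytic in the open strip $S$, with $t \mapsto F(it)$ bounded continuous into $A_0$ and $t \mapsto F(1+it)$ bounded continuous into $A_1$, such that $F(\theta) = a$ and $\|F\|_{\F} := \max\big\{ \sup_t \|F(it)\|_{A_0}, \sup_t \|F(1+it)\|_{A_1} \big\} \le 1 + \eps$. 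The natural candidate for the image is $G(z) := T_z F(z)$, which should land in $\F(B_0, B_1)$; I would then read off $T_\theta a = G(\theta)$ and estimate $\|G(\theta)\|_{B_\theta}$ by the three-lines inequality for the scalar function $z \mapsto \|G(z)\|$ (or rather, to stay inside the interpolation framework, directly bound $\|G\|_{\F(B_0,B_1)} \le \max\{\Lambda_0 \sup_t\|F(it)\|_{A_0},\ \Lambda_1 \sup_t\|F(1+it)\|_{A_1}\}$), giving $\|T_\theta a\|_{B_\theta} \le \Lambda_0^{1-\theta}\Lambda_1^\theta (1+\eps)$; letting $\eps \to 0$ finishes.

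To make the three-lines step quantitative with the correct geometric mean of $\Lambda_0, \Lambda_1$, I would not apply three lines to $\|G(z)\|$ directly but instead introduce the standard rescaling: set $G_\delta(z) := \Lambda_0^{z-1}\Lambda_1^{-z} e^{\delta(z^2 - \theta^2)} G(z)$ for $\delta > 0$ (assuming first $\Lambda_0,\Lambda_1 > 0$; the degenerate cases are handled by a limiting argument or are trivial since then $T_\theta a = 0$). The factor $\Lambda_0^{z-1}\Lambda_1^{-z}$ has modulus $\Lambda_0^{-1}$ on $\Re z = 0$ and $\Lambda_1^{-1}$ on $\Re z = 1$, and the Gaussian factor $e^{\delta(z^2-\theta^2)}$ forces decay as $|\Im z| \to \infty$ so that the maximum modulus / Phragm\'en--Lindel\"of principle applies to the bounded analytic scalar-valued majorant. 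One concludes $\|G_\delta(\theta)\|_{B_\theta} = \|G(\theta)\|_{B_\theta} \cdot \Lambda_0^{\theta-1}\Lambda_1^{-\theta}$ is bounded by $\sup_{\partial S} \|G_\delta\| \le \sup\{\Lambda_0^{-1}\Lambda_0(1+\eps),\ \Lambda_1^{-1}\Lambda_1(1+\eps)\} = 1+\eps$, so $\|T_\theta a\|_{B_\theta} \le \Lambda_0^{1-\theta}\Lambda_1^\theta(1+\eps)$; then send $\delta\to 0$ (trivial here since $G_\delta(\theta)$ is independent of $\delta$) and $\eps\to 0$.

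The main technical obstacle, and the place that needs genuine care rather than bookkeeping, is verifying that $G(z) = T_z F(z)$ genuinely belongs to $\F(B_0, B_1)$ — i.e.\ that it is bounded and continuous on $\bar S$, analytic in $S$, with the right boundary regularity into $B_0$ and $B_1$ — given only that $\{T_z\}$ is a \emph{bounded analytic family} in $\BB(A_0+A_1, B_0+B_1)$ with the stated boundary mapping properties. Boundedness of $\|G\|_{B_0+B_1}$ on $\bar S$ follows from uniform boundedness of $\|T_z\|_{\BB(A_0+A_1,B_0+B_1)}$ (a consequence of analyticity plus boundedness on the boundary, again via three lines for the scalar $z\mapsto \|T_z\|$) together with boundedness of $\|F(z)\|_{A_0+A_1}$. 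Analyticity of $z\mapsto T_z F(z)$ in $S$ is the product of the (weak/strong) analyticity of $z \mapsto T_z$ and analyticity of $z\mapsto F(z)$, and one reduces to the scalar case by pairing with functionals in $(B_0+B_1)^*$. The boundary behavior requires showing $t\mapsto T_{it}F(it) \in B_0$ is bounded continuous: here one uses that $T_{it}\in \BB(A_0,B_0)$ with $\sup_t\|T_{it}\|_{\BB(A_0,B_0)} = \Lambda_0 < \infty$ and that $t\mapsto F(it)$ is continuous into $A_0$, plus a standard argument (e.g.\ via the Poisson integral representation for bounded analytic functions on the strip, or density) to upgrade pointwise to the needed continuity; symmetrically on $\Re z = 1$. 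Once $G \in \F(B_0,B_1)$ is established, the remainder is the routine three-lines estimate above.
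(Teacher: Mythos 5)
Your plan is essentially the textbook argument: the paper itself offers no proof of this statement, quoting it as a known background result from Bergh--L\"ofstr\"om \cite{bergh_interpolation_1976}, and your sketch reproduces the standard proof given there (choose $F\in\mathcal{F}(A_0,A_1)$ with $F(\theta)=a$ and norm $\le 1+\eps$, pass to $G(z)=T_zF(z)$, rescale by $\Lambda_0^{z-1}\Lambda_1^{-z}$ with a Gaussian regularizer, and read the bound off the definition of the complex interpolation norm). The structure and the constants are right, and you correctly identify the one genuinely delicate point, namely membership of $G$ in $\mathcal{F}(B_0,B_1)$. Two small caveats: (i) the scalar function $z\mapsto\|T_z\|_{\mathbb{B}(A_0+A_1,B_0+B_1)}$ is not analytic, so the three-lines theorem does not apply to it literally; either argue via $\log$-subharmonicity after pairing with functionals, or simply note that uniform boundedness of $\|T_z\|$ on the closed strip is already part of the hypothesis ``bounded analytic family,'' so no such argument is needed; (ii) boundedness and continuity of $t\mapsto T_{it}F(it)$ into $B_0$ (and on the line $\Re z=1$ into $B_1$) needs more than the uniform bound $\Lambda_0$ plus continuity of $t\mapsto F(it)$ --- one must invoke the continuity/measurability clauses built into the definition of an admissible analytic family of operators (or the Poisson-representation argument you allude to), and this is exactly where different formulations of Stein interpolation differ; since the statement being proved takes that definition as given, your plan is acceptable, but a full write-up should make explicit which continuity of $z\mapsto T_z$ is being used at the boundary.
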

Kosaki $L_p$ space innately support interpolation, and were in fact originally defined as interpolation spaces \cite{kosaki_applications_1984}. Another important tool is the Haagerup reduction method \cite{haagerup_reduction_2010}. Sometimes, $\M \rtimes_\alpha \RR$ might be denoted as $\M \rtimes \RR$, suppressing the explicit denotation of the automorphism group. Alternatively, one may consider the crossed product with a discrete group. Here we recall the Haggerup reduction:
\begin{theorem}[\cite{haagerup_reduction_2010}, Theorem 2.1] \label{thm:haagerup}
Setting $G = \cup_n 2^{-n} \mathbb{Z} \subset \mathbb{R}$, there exists a sequence of von Neumann algebras and conditional expectations $(\E_k)_{n=1}^\infty : \M_k \rightarrow \M \rtimes G$ for which
    \begin{enumerate}
    \item[i)] $\E$ and $\tilde{\eta}=\eta\circ \E$ are faithful.
    \item[ii)] There exists an increasing family of subalgebras $\M_k$ and normal conditional expectations $\E_k: \M \rtimes G \to \M_k$ such that $\tilde{\eta}F_k=\tilde{\eta}$;
    \item[iii)] $\lim_k \|F_k(\psi)-\psi\|_{(\M \rtimes G)_*}=0$ for every normal state $\psi\in (\M \rtimes G)_*$;
    \item[iv)] For every $k$ there exists a normal faithful trace trace $\tau_k$ and a density $d_k \in L(G)$ such that for every $x \in \M_k$, $\tilde{\eta}(x) = \tau_k(d_k x)$, and $a_k\le d_k\le a_k^{-1}$ for scalars $a_k \in \RR^+$.
    Hence $\tilde{M}_k$ is of type $II_1$.
    \end{enumerate}
\end{theorem}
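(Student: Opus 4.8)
I would follow the construction of Haagerup, Junge and Xu \cite{haagerup_reduction_2010}; here is the plan. Fix the faithful normal state $\eta$ on $\M$ and let $\sigma^\eta$ be its modular automorphism group, restricted to the dense subgroup $G=\bigcup_n 2^{-n}\Z\subset\RR$ taken with the discrete topology. Form the crossed product $\R:=\M\rtimes_{\sigma^\eta}G$, which contains a copy of $\M$ together with unitaries $\lambda(g)$, $g\in G$, with $\lambda(g)x\lambda(g)^*=\sigma^\eta_g(x)$. Since $G$ is discrete, $\hat G$ is compact and averaging the dual coaction over $\hat G$ yields a faithful normal conditional expectation $\E:\R\to\M$; then $\tilde\eta:=\eta\circ\E$ is a faithful normal state on $\R$, which is (i). The structural fact I would use is the dual weight formula: $\sigma^{\tilde\eta}_t|_\M=\sigma^\eta_t$ and $\sigma^{\tilde\eta}_t(\lambda(g))=\lambda(g)$ for all $g\in G$; in particular $\sigma^{\tilde\eta}_g=\operatorname{Ad}(\lambda(g))$ for $g\in G$, so the modular flow of $\tilde\eta$ is inner on the dense subgroup $G$. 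This is the sense in which $\R$ is ``almost finite''.

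Next, for each $k$ I would produce the finite subalgebra $\M_k$ by perturbing $\tilde\eta$ so as to absorb the inner part coming from $2^{-k}\Z$. Write $\lambda(2^{-k})=e^{2\pi i b_k}$ with $b_k$ a positive element of $L(2^{-k}\Z)\subset L(G)$ with spectrum in $[0,1)$, obtained by Borel functional calculus; compatibility of the choices across $k$ is forced by $\lambda(2^{-k})=\lambda(2^{-k-1})^2$. Set $h_k:=e^{-2\pi\,2^k b_k}$, a positive element of $L(G)$ that is bounded and bounded below and is fixed by $\sigma^{\tilde\eta}$, so that the unitaries $h_k^{it}$ form a $\sigma^{\tilde\eta}$-cocycle; let $\tilde\eta_k$ be the corresponding perturbed state, with $\sigma^{\tilde\eta_k}_t=\operatorname{Ad}(h_k^{it})\circ\sigma^{\tilde\eta}_t$. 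Since $h_k^{i2^{-k}}=\lambda(2^{-k})^*$, a direct computation gives $\sigma^{\tilde\eta_k}_{2^{-k}}=\operatorname{Ad}(\lambda(2^{-k})^*\lambda(2^{-k}))=\mathrm{id}$, so $\sigma^{\tilde\eta_k}$ is periodic of period $2^{-k}$. Define $\M_k:=\R^{\sigma^{\tilde\eta_k}}$, the centralizer of $\tilde\eta_k$. Then $\tau_k:=\tilde\eta_k|_{\M_k}$ is a faithful normal tracial state, so $\M_k$ is finite (of type $II_1$ after discarding trivial summands), and since $h_k\in L(2^{-k}\Z)\subseteq\M_k$ one gets $\tilde\eta(x)=\tau_k(h_k^{-1}x)$ for $x\in\M_k$, with density $d_k:=h_k^{-1}\in L(G)$ bounded above and below — this is (iv). The conditional expectation $\E_k:\R\to\M_k$ is the $\tilde\eta_k$-preserving one onto the centralizer (the modular mean $\E_k(x)=2^k\int_0^{2^{-k}}\sigma^{\tilde\eta_k}_t(x)\,dt$); because the perturbing density $h_k$ lies in $\M_k$, $\E_k$ also preserves $\tilde\eta$, so $\tilde\eta\circ\E_k=\tilde\eta$ as in (ii).

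The remaining point — the technical heart — is to check that the $\M_k$ form an increasing family with $\bigcup_k\M_k$ weak-$*$ dense in $\R$. For nesting one checks that passing from level $k$ to $k+1$ alters the perturbing cocycle only by $\operatorname{Ad}$ of the spectral projection $p_k=\mathbf{1}_{[1/2,1)}(b_{k+1})\in L(2^{-k-1}\Z)$, so $\M_k\subseteq\M_{k+1}$ reduces to $p_k$ commuting with $\M_k$, which follows from the compatible choice of the $b_k$. Density is the subtle part: one shows directly that every $\lambda(g)$, $g\in G$, lies in some $\M_k$, and that for $x\in\M$ the averages $\E_k(x)\to x$, using strong continuity of $\sigma^\eta$ near $0$ together with the precise scaling of $h_k$; this is the estimate carried out in \cite[Theorem 2.1]{haagerup_reduction_2010}, and is where the particular choice $G=\bigcup_n 2^{-n}\Z$ is used essentially. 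Once $\bigcup_k\M_k$ is dense, $(\E_k)$ is an increasing martingale of $\tilde\eta$-preserving conditional expectations, hence $\E_k\to\mathrm{id}$ pointwise in $\R_*$ by martingale convergence, which is (iii). I expect this last density step to be the main obstacle.
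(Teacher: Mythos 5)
This statement is quoted by the paper from Haagerup--Junge--Xu \cite{haagerup_reduction_2010} and is not proved in the paper at all, so there is no in-paper argument to compare against. Your outline is essentially the construction of the cited source itself: crossed product by the discrete dyadic group, the canonical expectation $\E$ and dual state $\tilde\eta$, the compatible logarithms $b_k$ of $\lambda(2^{-k})$, the perturbed states with $2^{-k}$-periodic modular flow whose centralizers give the increasing finite subalgebras $\M_k$ with bounded densities, and martingale convergence of the $\tilde\eta$-preserving expectations for item (iii). It is a correct sketch of the same approach, with the genuinely hard steps (nesting of the $\M_k$ and weak-$*$ density of $\bigcup_k \M_k$) explicitly deferred to \cite[Theorem 2.1]{haagerup_reduction_2010}, which is exactly how the paper uses the result.
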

Though not all von Neumann algebras admit hyperfinite or finite-dimensional approximations, the Haagerup reduction shows that all von Neumann algebras admit finite approximations. If a quantity is compatible with the crossed product and Haagerup reduction, then estimates derived using finite trace transfer to general von Neumann algebras.

\section{Proof of Main Results}
The proof of Theorem \ref{thm:main} follows successive, arbitrarily precise approximations. First, using the Haagerup reduction, we approximate relative entropy in arbitrary von Neumann algebras by that in finite von Neumann algebras. Within finite von Neumann algebras, there is an easy identification between densities and states. We may then approximate each density by one with finite spectrum. Finite spectrum enables the method of types as described in \cite{hayashi_optimal_2002}, which exploits the fact that for many copies of the same states, though the dimension grows exponentially in copy-number, the number of distinct eigenvalues grows only polynomially. Ultimately, the method of types yields a projection to a commuting von Neumann algebra that approximately preserves the distinguishability of states. Because commutative von Neumann algebars are automatically hyperfinite, we may then apply results of \cite{hiai_quantum_2023} to obtain a sequence of tests achieving the desired values. Since these tests apply to states in the original von Neumann algebra, the same value is achieved there. The converse follows almost immediately from known results for relative entropy. Below we explain this derivation in detail.

We often use a semidefinite order comparison assumption between two states $\rho$ and $\eta$:
\begin{equation} \label{eq:ordassump}
    a \rho \leq \eta
\end{equation}
for some $a > 0$. This assumption guarntees that $D^*_\alpha(\rho \| \eta)$ and $D^*_\alpha(\eta \| \rho)$ have values within the interval $[0, a^{-1}]$ for all $\alpha \in [1, \infty]$. 

\subsection{Approximation via Finite Algebras} \label{sec:approx-finite-alg}
\begin{lemma} \label{lem:haagerupconv}
    Let $\eta$ be a state and $\E_n : \N \rightarrow \N_n$ a sequence of $\eta$-preserving conditional expectations for an increasing sequence of algebras $(\N_n)_{n=1}^\infty$ such that for every state $\omega \in \N_*$, $\lim_{n} \E_n(\omega)$ converges to $\omega$ in the $\N_*$ norm.
    Then $\lim_{n \rightarrow \infty} Q^*_p(\E_n (\rho) \| \E_n (\eta))$ converges uniformly in $p$ on any compact interval $p \in [1, p_{\max}]$ and for any state $\rho$ satisfying Equation \eqref{eq:ordassump}.
\end{lemma}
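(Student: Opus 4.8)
The plan is to reduce the statement to an analytic perturbation argument via Stein interpolation, by first showing pointwise convergence for each fixed $p$ and then upgrading to uniform convergence using analyticity of the family $p \mapsto Q^*_p(\E_n(\rho) \| \E_n(\eta))$ together with a uniform bound coming from the order assumption \eqref{eq:ordassump}. First I would record that, since each $\E_n$ is an $\eta$-preserving normal conditional expectation, we have $\E_n(\eta) = \eta \circ \E_n = \tilde\eta_n$ a faithful normal state on $\N_n$, and $a\rho \le \eta$ is preserved under the (positive, unital) dual map, so $a\,\E_n(\rho) \le \E_n(\eta)$ holds with the \emph{same} constant $a$ for all $n$. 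By Remark \ref{rem:sandwich} this already confines $Q^*_p(\E_n(\rho)\|\E_n(\eta))$ to a compact subinterval of $(0,\infty)$ uniformly in $n$ and in $p \in [1,p_{\max}]$; in particular there is no escape to $0$ or $\infty$.

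Next I would establish pointwise convergence for each fixed $p$. Write $\rho^{1/2} = \nu\,\eta^{1/2}$ using $\rho \le a^{-1}\eta$ (Remark \ref{rem:sandwich}), so $Q^*_p(\rho\|\eta) = \|\eta^{1/2}\nu^*\nu\,\eta^{1/2}\|_{L^{1/2}_p(\eta)}$, and similarly $\E_n(\rho)^{1/2} = \nu_n (\E_n\eta)^{1/2}$ on $\N_n$. The key observation is that the Kosaki $L^{1/2}_p$-norm with weight $\E_n(\eta)$ restricted to $\N_n$ is compatible with the inclusion $\N_n \hookrightarrow \N$ equipped with the $\eta$-weighted Kosaki norm, because $\E_n$ intertwines the modular structures (the conditional expectation preserves $\tilde\eta$). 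Combined with hypothesis (iii), $\E_n(\rho) \to \rho$ and $\E_n(\eta) \to \eta$ in $\N_*$-norm, which yields convergence of the associated elements $\nu_n \to \nu$ in the appropriate Kosaki/Haagerup $L_2$-sense, and hence $Q^*_p(\E_n(\rho)\|\E_n(\eta)) \to Q^*_p(\rho\|\eta)$ for each fixed $p$. (Alternatively, one invokes lower semicontinuity/continuity of $D^*_\alpha$ under the conditional expectations together with data processing to squeeze the limit; either route gives pointwise convergence.)

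Finally I would promote pointwise to uniform convergence on $[1,p_{\max}]$ via an equicontinuity argument. The functions $f_n(p) := Q^*_p(\E_n(\rho)\|\E_n(\eta))$ are, for each $n$, real-analytic in $p$ on $(0,\infty)$ since the $p$-dependence sits entirely in the Kosaki norm weighting (Remark \ref{rem:sandwich}); more precisely each $f_n$ extends to a holomorphic function on a fixed complex neighbourhood of $[1,p_{\max}]$, and by Stein interpolation (Theorem \ref{thm:stein}) applied between the endpoints of a slightly larger interval, together with the uniform two-sided bound from $a\,\E_n(\rho) \le \E_n(\eta)$, the family $\{f_n\}$ is uniformly bounded on that complex neighbourhood. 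A normal-families / Vitali–Porter argument then shows that pointwise convergence of the uniformly bounded holomorphic family $\{f_n\}$ forces uniform convergence on the compact set $[1,p_{\max}]$. The main obstacle I anticipate is the second step: making rigorous the convergence $\nu_n \to \nu$ (equivalently, continuity of $Q^*_p$ under the $\eta$-preserving conditional expectations) at the level of Haagerup/Kosaki $L_p$ spaces across different algebras $\N_n \subset \N$, since this requires carefully tracking that the modular objects $\Delta_{\rho|\eta}$ and the weights behave coherently under $\E_n$; the order assumption \eqref{eq:ordassump} is exactly what tames this by bounding all densities away from $0$ and $\infty$.
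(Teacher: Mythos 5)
Your first two steps track the paper's own argument closely: the order relation $a\,\E_n(\rho)\le \E_n(\eta)=\eta$ is preserved with the same constant because the predual maps are positive and $\eta$-preserving, one factorizes $\E_n(\rho)^{1/2}=\nu_n\,\eta^{1/2}$ as in Remark \ref{rem:sandwich}, and convergence $\E_n(\rho)\to\rho$ in $\N_*$ yields convergence of the weighted Kosaki norms, hence of $Q^*_p(\E_n(\rho)\|\eta)$ for each fixed $p$. The gap is in your uniformity step. You assert that each $f_n(p):=Q^*_p(\E_n(\rho)\|\E_n(\eta))$ extends holomorphically to a \emph{fixed complex neighbourhood} of $[1,p_{\max}]$ and is uniformly bounded there ``by Stein interpolation'', and then invoke Vitali--Porter. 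Neither ingredient is established, and neither is routine. Remark \ref{rem:sandwich} only claims that the $p$-dependence of the norm weighting is analytic for \emph{real} $p\in(0,\infty)$; the natural candidate expression behind $Q^*_p$, of the schematic form $\tr\big((\eta^{(1-p)/2p}\,d_\rho\,\eta^{(1-p)/2p})^{p}\big)$, involves a $p$-th power of an operator that is no longer self-adjoint once $p$ is complex, so a holomorphic extension in $p$ is not available by functional calculus and would require a separate construction. Likewise, Theorem \ref{thm:stein} bounds an analytic \emph{family of operators} on a strip and yields norm estimates for the interpolation spaces at real interpolation parameters; it does not produce a holomorphic function of the complex variable $p$, nor uniform bounds for complex $p$ in a neighbourhood of the interval. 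Without both, the normal-families argument cannot be run.

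The repair is simpler than what you propose and is essentially the paper's route. Since the algebras $\N_n$ increase and the $\E_n$ are $\eta$-preserving conditional expectations, the data processing inequality for the sandwiched R\'enyi quantities makes $f_n(p)$ monotone (non-decreasing) in $n$ for every fixed $p\in[1,p_{\max}]$. Each $f_n$, and the limit function, is continuous in $p$ on the compact interval: this follows for real $p$ from the interpolation/H\"older estimate $\|x\|_p\le \|x\|_\infty^{1-q/p}\|x\|_q^{q/p}$ together with monotonicity of the Kosaki scale (as in the paper's continuity remark), or directly from the Stein-interpolation bound between the endpoint norms at $p=1$ and $p=p_{\max}$, which under \eqref{eq:ordassump} are uniformly controlled in $n$. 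Monotone pointwise convergence of continuous functions to a continuous limit on a compact interval is uniform by Dini's theorem, which gives the claimed uniform convergence on $[1,p_{\max}]$ without any complex-analytic extension in $p$.
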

\begin{proof}
    Let $\gamma_n := \nu_{\E_n(\rho) | \eta} \eta^{1/2}$ and $\nu_{\E_n(\rho) | \eta}$ be defined such that $\E_n(\rho)^{1/2} = \nu_{\E_n(\rho) | \eta} \eta^{1/2}$ when $\E_n(\rho)^{1/2} \leq C \eta^{1/2}$ for some $C > 0$. Recall that if $\rho^{1/2} \leq C \eta^{1/2}$, then $\E_n(\rho)^{1/2} \leq C \E_n(\eta)^{1/2} = \eta^{1/2}$.
    Recalling that $\E_n(\eta) = \eta$, convergence of the sequence $(\E_n(\rho))_{n=1}^\infty$ implies convergence of the sequence $(\nu_{\E_n(\rho) | \E_n(\eta)} \E_n(\eta)^{1/2})_{n=1}^\infty$. The Kosaki norm $ \| \gamma_n^* \gamma_n \|_{L_{p}^{1/2}(\eta)}$ thereby converges.
    What remains is to show that this convergence can be made uniform for $p \in [1,p_{\max}]$.
    Via Stein's interpolation as in Theorem \ref{thm:stein} and using Remark \ref{rem:sandwich},
    \[ Q^{*}_{p_{\max}/(1 + (p_{\max} - 1) \theta)}(\E_n(\rho) \| \eta) \leq \| \gamma_n^* \gamma_n \|_{L_{p_{\max}}^{1/2}(\eta)}^{1-\theta} \| \gamma_n^* \gamma_n \|_{L_1^{1/2}(\eta)}^\theta \]
    for $\theta \in [0,1]$. Hence $Q^{*}_{p}(\E_n(\rho) \| \eta) \leq \max_\theta \| \gamma_n^* \gamma_n \|_{L_{p_{\max}}^{1/2}(\eta)}^{1-\theta} \| \gamma_n^* \gamma_n \|_{L_1^{1/2}(\eta)}^\theta$. The data processing inequality for sandwiched relative R\'enyi entropy ensures that it eventually increases in $n$. Therefore, $Q^{*}_{p}(\E_n(\rho) \| \eta)$ converges at a rate upper bounded by that of $\max_\theta \| \gamma_n^* \gamma_n \|_{L_{p_{\max}}^{1/2}(\eta)}^{1-\theta} \| \gamma_n^* \gamma_n \|_{L_1^{1/2}(\eta)}^\theta$ for all $p \in [1,p_{\max}]$. This convergence is uniform.
\end{proof}
\begin{lemma} \label{lem:qlim}
Assuming Equation \eqref{eq:ordassump}, 
\[ \lim_{p\to \infty} Q^*_p(\rho\|\eta) \lel Q^*_{\infty}(\rho\|\eta) \pl .\] 
\end{lemma}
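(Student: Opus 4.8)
The plan is to show convergence of $Q^*_p(\rho\|\eta)$ as $p\to\infty$ by exploiting the fact, highlighted in Remark \ref{rem:sandwich}, that the $p$-dependence of $Q^*_p(\rho\|\eta)$ lives entirely in the Kosaki norm weighting, the underlying operator $h := \eta^{1/2}\nu^*\nu\eta^{1/2}$ (where $\rho^{1/2} = \nu\eta^{1/2}$, which exists by the order assumption \eqref{eq:ordassump}) being independent of $p$. So the statement reduces to the purely functional-analytic claim that $\|h\|_{L_p^{1/2}(\eta)} \to \|h\|_{L_\infty^{1/2}(\eta)}$ as $p\to\infty$, together with the identification of the right-hand side as $Q^*_\infty$.

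First I would fix the reduction via the order assumption: since $a\rho\le\eta$ and in particular $\rho\le a^{-1}\eta$ in the appropriate sense, Remark \ref{rem:sandwich} gives $\nu$ with $\nu^*\nu$ bounded (indeed $\|\nu^*\nu\|\le a^{-1}$), hence $h = \eta^{1/2}\nu^*\nu\eta^{1/2}$ is a positive element whose ``$\eta$-symmetrized'' $L_\infty$-norm is finite. Second, I would invoke the standard fact about Kosaki $L_p(\eta)$-spaces: for a fixed element $x\in\M$, the map $p \mapsto \|x\|_{L_p^{1/2}(\eta)}$ is nondecreasing in $p$ (a consequence of Hölder/interpolation together with $\eta$ being a state, so $\|{\cdot}\|_{L_p}$ norms are monotone in $p$ on the relevant trace-one-like normalization) and converges to the $L_\infty$-norm. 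The monotonicity already appears implicitly in the proof of Lemma \ref{lem:haagerupconv} via Stein interpolation between $L_1^{1/2}(\eta)$ and $L_{p_{\max}}^{1/2}(\eta)$; the same interpolation inequality, $\|x\|_{L_{p}^{1/2}(\eta)} \le \|x\|_{L_{p_{\max}}^{1/2}(\eta)}^{1-\theta}\|x\|_{L_1^{1/2}(\eta)}^{\theta}$ with $1/p = (1-\theta)/p_{\max} + \theta$, shows $p\mapsto \|x\|_{L_p^{1/2}(\eta)}$ is nondecreasing, hence has a limit in $[0,\infty]$, which is finite and equals $\|x\|_{L_\infty^{1/2}(\eta)}$ because $h$ is bounded in the $L_\infty^{1/2}(\eta)$ sense. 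Third, I would observe that by definition $Q^*_\infty(\rho\|\eta)$ is exactly $\|h\|_{L_\infty^{1/2}(\eta)}$, so the two sides match. A clean way to package the convergence is: the increasing limit $L := \lim_{p\to\infty}\|h\|_{L_p^{1/2}(\eta)}$ satisfies $L \le \|h\|_{L_\infty^{1/2}(\eta)}$ by monotonicity, while $L \ge \|h\|_{L_\infty^{1/2}(\eta)}$ follows from lower semicontinuity of the $L_\infty$-norm under the $L_p$-norms (or directly: $\|h\|_{L_p^{1/2}(\eta)} = \|\eta^{1/2p}\nu^*\nu\eta^{1/2p}\|_{L_p}$, and as $p\to\infty$ the weights $\eta^{1/2p}\to$ support projection strongly while $\|{\cdot}\|_{L_p}\to\|{\cdot}\|_\infty$, giving convergence to $\|\nu^*\nu\|$ restricted to the support, which is $Q^*_\infty$).

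The main obstacle I anticipate is the bookkeeping at the operator level in the general (type III / non-tracial) case: unlike the finite-trace setting, $\eta^{1/2p}$ and $\nu^*\nu$ do not live in a single tracial $L_p$ and one must interpret $\|\eta^{1/2p}\nu^*\nu\eta^{1/2p}\|_{L_p}$ via the Haagerup/Kosaki formalism, where "$p\to\infty$" behavior of the weights and the $L_p$-norm have to be controlled simultaneously. I would sidestep the trickiest version of this by not attempting an explicit operator limit but rather by (i) using only the monotonicity of $p\mapsto \|h\|_{L_p^{1/2}(\eta)}$ to get existence of the limit, and (ii) using the definition of $Q^*_\infty$ together with the already-established (in Remark \ref{rem:sandwich}) monotonic non-decrease of $D^*_\alpha$ and $Q^*_\alpha$ in $\alpha$ plus the a priori bound $Q^*_\alpha\in$ a fixed compact set from \eqref{eq:ordassump} to pin the limit to the supremum, which is $Q^*_\infty$ by convention. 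If a cleaner route is available it is to cite the known continuity of $\alpha\mapsto D^*_\alpha(\rho\|\eta)$ at $\alpha=\infty$ for pairs with $a\rho\le\eta$, which is standard in the Kosaki-space literature, and translate it back through $D^*_\alpha = -2\alpha\ln Q^*_\alpha$ (equivalently Remark \ref{rem:hoeffdingequiv}); I would present the interpolation-monotonicity argument as the self-contained proof and mention the continuity fact as the conceptual reason.
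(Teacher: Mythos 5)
Your reduction and the easy half are sound: with $\rho^{1/2}=\nu\eta^{1/2}$ the operator $h=\eta^{1/2}\nu^*\nu\,\eta^{1/2}$ is $p$-independent, the state-weighted Kosaki norms are nondecreasing in $p$, so the limit $L:=\lim_{p\to\infty}Q^*_p(\rho\|\eta)$ exists and satisfies $L\le Q^*_\infty(\rho\|\eta)$. The genuine gap is the reverse inequality $L\ge Q^*_\infty(\rho\|\eta)$, which is the entire content of the lemma, and none of your three routes establishes it. ``Lower semicontinuity of the $L_\infty$-norm under the $L_p$-norms'' is not an off-the-shelf fact in this setting, because the weight in $\|\eta^{1/2p}\nu^*\nu\,\eta^{1/2p}\|_{L_p}$ degenerates as $p\to\infty$; your ``direct'' version interchanges two limits (the weights tending to the support projection and $\|\cdot\|_{L_p}\to\|\cdot\|_\infty$) without justification, which is precisely the simultaneous control you yourself flag as the obstacle. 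Your proposed sidestep (ii) is circular: declaring that the monotone limit equals ``the supremum, which is $Q^*_\infty$ by convention'' only works if $Q^*_\infty$ were \emph{defined} as that supremum, in which case the lemma is vacuous; here $Q^*_\infty$ is the genuine weighted $L_\infty$ quantity, and its coincidence with the limit is exactly what must be proved. Likewise, citing ``known continuity of $\alpha\mapsto D^*_\alpha$ at $\alpha=\infty$'' amounts to citing the statement to be proved.

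The paper closes exactly this gap with a quantitative argument for which you would need a substitute: a weighted Chebyshev-type inequality $\mathrm{Prob}_\eta(v>\lambda)\,\lambda^p\le\|v\,\eta^{1/2p}\|_p^p$, obtained from the interpolation embedding $L_p=[L_\infty,L_2]_{2/p}\subset L_{2/p,\infty}$ and a splitting $v=x_1+x_2$ with a projection controlling $x_2$. Assuming toward a contradiction that $\lim_p\|x\|_p\le\gamma<\|x\|_\infty$, this inequality yields spectral projections $e_p$ with $xe_p\le(1+\varepsilon)\gamma$ and $\eta(1-e_p)\le(1+\varepsilon)^{-p}$; since $(1+\varepsilon)^{-p}\to0$ and $\eta$ is faithful, $e_p\to1$ strongly, forcing $\|x\|_\infty\le(1+\varepsilon)\gamma$, a contradiction. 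Some argument of this type --- a distributional lower bound on the weighted $p$-norms combined with strong convergence of spectral projections --- is the missing ingredient in your write-up; the monotonicity and interpolation facts you do prove only give the upper bound.
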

\begin{proof} We first show that for  every $\nu$ and $2<p<\infty$
 \[ {\rm Prob}_{\eta}(v>\la)\la^p \kl \|v\eta^{1/2p}\|_p^p \pl .\] 
Indeed, 
 \[ L_p\lel [L_{\infty},L_2]_{2/p} \subset L_{2/p,\infty}\] 
(see  \cite{kosaki_applications_1984}). Then we can find $v=x_1+x_2$ such that
 \[ \|x_1\|_{\infty}+t\|x_2\|_2\kl t^{2/p}\|x\|_p \pl .\]
For $p=2$ Chebyshev's inequality is trivial and hence there exists a projection $e$ with $\|x_2e\|_{\infty}\le \mu$ and 
 \[ \eta(1-e)\mu^2\le \|x_2\|_2^2 \pl .\] 
We choose $\mu=t\|x_2\|_2$ and get
  \[ \|x_2 e\|_{\infty} \kl t^{2/p} \|x\|_p \] 
and 
 \[ \eta(1-e)\le t^{-2} \] 
For a given $\la$ we choose $t$ such that $\la=t^{2/p}\|x\|_p$, i.e  $t^{2/p}\lel \frac{\la}{\|x\|_p}$. This gives $t^{-2}=\frac{\|x\|_p^p}{\la^p}$. 

With the help of the Chebychev inequality it is not hard to conclude. Assume that $\lim_p \|x\|_p\le \gamma < \|x\|_{\infty}$. Let $(1+\eps)\gamma<\|x\|_{\infty}$. Then we can find $e_p$ such that  
 \[ \eta(1-e_p)((1+\eps)\gamma)^p\kl \gamma^p \pl .\] 
 and $xe_p\kl (1+\eps)\gamma$. Since $(1+\eps)^{-p}$ converges to $0$, we deduce that $e_p$ converges to $1$ in the strong operator topology and hence 
 \[ \|x\|_{\infty} \kl \lim_p \|xe_p\|_{\infty}\kl (1+\eps)\gamma \pl .\] 
This contradiction concludes the proof.
\end{proof}
\begin{rem} The scale of $L_p$ norms is continuous. Indeed, since $\| \cdot \|_{L_p^{1/2}}$ is monotone in $p$, we just have to argue that $\|x\|_p\kl \limsup_{q\to p} \|x\|_q =: \gamma$. Recall that 
 \[ \|x\|_p \kl \|x\|_{\infty}^{1-q/p} \|x\|_q^{q/p} \kl
 \|x\|_{\infty}^{1-q/p} \gamma^{q/p}
 \] 
 holds for $\frac{1}{p}=\frac{1-\theta}{\infty}+\frac{\theta}{q}$. Sending $q\to p$ yields $\|x\|_p\le \gamma$.   
\end{rem}    
\begin{lemma} \label{lem:compact1}
    Consider a von Neumann algebra $\N$ and any states $\rho, \eta \in \N_*$ obeying Equation \eqref{eq:ordassump}. Then
    \begin{itemize}
        \item For every $r \leq D_1(\rho \| \eta)$ or if $r = 0$, $H_r^*(\rho \| \eta) = 0$, and the supremum is achieved at $\alpha = 1$. Hence for every $\alpha_0 \in (1, \infty)$,
         \[ 0 = H_r^*(\rho \| \eta) \leq \sup_{\alpha \in (1,  \alpha_0]} \frac{\alpha - 1}{\alpha} \Big (  r - D^*_\alpha(\rho \| \eta) \Big ) \pl. \]
        \item For every $r \in (D^* (\rho \| \eta), D^*_\infty(\rho \| \eta))$, there exists an $\alpha_0 \in (1, \infty)$ for which
        \[ H_r^*(\rho \| \eta) \leq \sup_{\alpha \in (1,  \alpha_0]} \frac{\alpha - 1}{\alpha} \Big (  r - D^*_\alpha(\rho \| \eta) \Big ) \pl. \]
        \item For every $r \geq D^*_\infty(\rho \| \eta) > 0$ and $\epsilon > 0$, there exists an $\alpha_0 \in (1, \infty)$ for which
        \[ H_r^*(\rho \| \eta) \leq \sup_{\alpha \in (1,  \alpha_0]} \Big ( \frac{\alpha - 1}{\alpha} r - \ln Q^*_\alpha(\rho \| \eta) \Big ) + \epsilon \pl. \]    \end{itemize}
\end{lemma}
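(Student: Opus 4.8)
The plan is to reduce the statement to elementary one–variable analysis of
\[ \phi(\alpha)\ :=\ \frac{\alpha-1}{\alpha}\bigl(r-D_\alpha^*(\rho\|\eta)\bigr),\qquad \alpha\in(1,\infty), \]
for which $H_r^*(\rho\|\eta)=\sup_{\alpha>1}\phi(\alpha)$ by \eqref{eq:Hr} and, equivalently, $\phi(\alpha)=\frac{\alpha-1}{\alpha}r-\ln Q_\alpha^*(\rho\|\eta)$ by Remark \ref{rem:hoeffdingequiv}. Writing $\phi=w\cdot g$ with $w(\alpha)=1-\frac1\alpha$ and $g(\alpha)=r-D_\alpha^*(\rho\|\eta)$, I would record two facts that drive everything: $w$ increases from $0$ at $\alpha=1$ to $1$ as $\alpha\to\infty$; and $g$ is non-increasing because $D_\alpha^*(\rho\|\eta)$ is non-decreasing in $\alpha$ (Remark \ref{rem:sandwich}), taking values in the bounded interval $[\,r-a^{-1},r\,]$ thanks to \eqref{eq:ordassump}. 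In particular $\phi$ is bounded on $(1,\infty)$ and $\lim_{\alpha\to1^+}\phi(\alpha)=0$ (since $w\to0$ while $g$ stays bounded), so extending $\phi$ by $\phi(1):=0$ leaves $\sup\phi$ unchanged and $\sup_{\alpha\in(1,\beta]}\phi\ge0$ for every $\beta>1$.

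For the first bullet, if $r\le D_1(\rho\|\eta)$ then $D_\alpha^*(\rho\|\eta)\ge D_1(\rho\|\eta)\ge r$ by monotonicity, while if $r=0$ then $g(\alpha)=-D_\alpha^*(\rho\|\eta)\le0$ directly; in both cases $\phi\le0$ on $(1,\infty)$, hence the supremum is attained at $\alpha=1$ and equals $0$, i.e.\ $H_r^*=0$, and for every $\alpha_0\in(1,\infty)$ one has $0\le\sup_{\alpha\in(1,\alpha_0]}\phi\le0$, which is the asserted inequality. For the second bullet, since $D_\infty^*(\rho\|\eta)=\lim_{\alpha\to\infty}D_\alpha^*(\rho\|\eta)=\sup_{\alpha>1}D_\alpha^*(\rho\|\eta)>r$, I would pick a finite $\alpha_0>1$ with $D_{\alpha_0}^*(\rho\|\eta)>r$; then for $\alpha>\alpha_0$ monotonicity gives $D_\alpha^*(\rho\|\eta)\ge D_{\alpha_0}^*(\rho\|\eta)>r$, so $\phi(\alpha)=w(\alpha)g(\alpha)<0\le\sup_{\alpha\in(1,\alpha_0]}\phi$, and therefore $H_r^*=\sup_{\alpha>1}\phi=\sup_{\alpha\in(1,\alpha_0]}\phi$.

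For the third bullet, now $D_\alpha^*(\rho\|\eta)\le D_\infty^*(\rho\|\eta)\le r$ for every $\alpha$, so $g\ge0$ and $\phi\ge0$. If $r=D_1(\rho\|\eta)$ the monotone chain $D_1\le D_\alpha^*\le D_\infty^*\le r=D_1$ forces $D_\alpha^*(\rho\|\eta)\equiv r$ and $\phi\equiv0$, a trivial case; otherwise $c:=r-D_1(\rho\|\eta)>0$ and I would choose $\alpha_0>1$ with $\alpha_0\ge c/\epsilon$. For $\alpha>\alpha_0$, using $w(\alpha)\le1$, the monotonicity $g(\alpha)\le g(\alpha_0)$, and the identity $\phi(\alpha_0)=(1-\frac1{\alpha_0})g(\alpha_0)$,
\[ \phi(\alpha)\ \le\ g(\alpha_0)\ =\ \phi(\alpha_0)+\frac1{\alpha_0}g(\alpha_0)\ \le\ \phi(\alpha_0)+\frac{c}{\alpha_0}\ \le\ \sup_{\alpha\in(1,\alpha_0]}\phi+\epsilon, \]
since $0\le g(\alpha_0)=r-D_{\alpha_0}^*(\rho\|\eta)\le r-D_1(\rho\|\eta)=c$. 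Taking the supremum over all $\alpha>1$ and rewriting $\phi(\alpha)$ as $\frac{\alpha-1}{\alpha}r-\ln Q_\alpha^*(\rho\|\eta)$ gives the stated bound $H_r^*(\rho\|\eta)\le\sup_{\alpha\in(1,\alpha_0]}\bigl(\frac{\alpha-1}{\alpha}r-\ln Q_\alpha^*(\rho\|\eta)\bigr)+\epsilon$.

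Essentially the whole argument is bookkeeping once monotonicity and uniform boundedness of $\alpha\mapsto D_\alpha^*(\rho\|\eta)$ on $[1,\infty]$ are available (both recalled earlier in the paper), together with the identification $D_\infty^*=\sup_\alpha D_\alpha^*$ (again just monotonicity). The single place requiring care — and the reason the third bullet carries an $\epsilon$ while the first two do not — is that for $r\ge D_\infty^*$ the supremum defining $H_r^*$ is approached only in the limit $\alpha\to\infty$, never attained at a finite parameter, so it can be matched on a compact interval $(1,\alpha_0]$ only up to an arbitrarily small error; I expect this, modest as it is, to be the main obstacle.
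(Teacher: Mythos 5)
Your proof is correct and follows essentially the same route as the paper's: a case analysis of $\alpha\mapsto\frac{\alpha-1}{\alpha}\bigl(r-D_\alpha^*(\rho\|\eta)\bigr)$ using monotonicity of $D_\alpha^*$ in $\alpha$ and its boundedness under \eqref{eq:ordassump}, with the truncation to a compact interval costing an $\epsilon$ only in the regime $r\ge D_\infty^*(\rho\|\eta)$, exactly as in the paper. One small caveat: the identification $D_\infty^*(\rho\|\eta)=\lim_{\alpha\to\infty}D_\alpha^*(\rho\|\eta)$ that you invoke for the second bullet is not ``just monotonicity'' (monotonicity alone only gives $\le$); it is precisely Lemma \ref{lem:qlim} of the paper, which you are entitled to cite.
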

\begin{proof}
    Recall Equation \eqref{eq:hoeffding1}. Moreover, recall that $D^*_\alpha(\rho \| \eta)$ is non-decreasing in $\alpha \in [1, \infty]$. If $r \leq D^*_1(\rho \| \eta)$, then it is clear that $r - D^*_\alpha(\rho \| \eta) < 0$ for all $\alpha \geq 1$. Also, within this range,
    \[  \frac{\alpha - 1}{\alpha} \big ( r - D_\alpha^*(\rho \| \eta) \big ) \]
    is monotonically decreasing in $\alpha$, because increasing $\alpha$ makes $D_\alpha^*(\rho \| \eta)$ larger by the monotonicy of relative entropy in $\alpha$, and increasing $\alpha$ makes $(\alpha - 1)/\alpha$ larger, multiplying a negative number. Therefore, the supremum is achieved at $\alpha = 1$.
    
    If $r \in (D^* (\rho \| \eta), D^*_\infty(\rho \| \eta))$, there exists an $\alpha_0$ such that for all $1 < \alpha \leq \alpha_0$, $r - D^*_\alpha(\rho \| \eta) > 0$ and for all $\alpha > \alpha_0$, $r - D^*_\alpha(\rho \| \eta) < 0$. Hence the supremum is achieved with $\alpha \in (1, \alpha_0]$.

    If $r \geq D^*_\infty(\rho \| \eta)$, then the supremum in Equation \eqref{eq:hoeffding1} may require arbitrarily large $\alpha$, including $\alpha = \infty$.
    For every $\delta > 0$, there exists an $\alpha_1 < \infty$ sufficiently large that
    \begin{equation} \label{eq:fraccompare1}
        \frac{\alpha_2 - 1}{\alpha_2} \geq (1 - \delta) \geq \frac{\alpha - 1}{\alpha}  - \delta
    \end{equation}
    for every $\alpha > 1$ and every $\alpha_2 > \alpha_1$. Recall that under Equation \eqref{eq:ordassump}, $D_\infty^*(\rho \| \eta) < a$ for the assumed constant $a > 0$. Moreover, recall that $D_\alpha^*(\rho \| \eta)$ is non-decreasing in $\alpha$. 
    Therefore,
    \[ H_r^*(\rho \| \eta) \leq \sup_{\alpha \in (1,\alpha_0)}
        \Big (\frac{\alpha - 1}{\alpha} r - D_\alpha^*(\rho \| \eta) \Big ) + \delta r  \pl. \]
    To complete the $r \geq D^*_\infty(\rho \| \eta)$ case, set $\delta = \epsilon / r$.
\end{proof}

\begin{lemma} \label{lem:haageruphoeffding}
    Let $(\M, \eta)$ repsectively be a von Neumann algebra and normal, faithful state. Let $G = \bigcup_{n} 2^{-n}\zz\subset \rz$ and $\M \rtimes G$ denote the crossed product with respect to $\eta$. Let $\hat{\eta}$ denote the canonical embedding of $\eta$ in $(\M \rtimes G)_*$. Let
    \[ H^{* \leq \alpha_0}_r(\rho \| \eta) := \sup_{\alpha \in (1, \alpha_0]} \frac{\alpha - 1}{\alpha} \big (  r -  D^*_\alpha(\rho \| \eta) \big ) \pl. \]
    Then for every $\alpha_0 \in (1, \infty)$ and $\rho \in \M_*$ obeying Equation \eqref{eq:ordassump}, there exists a sequence of finite von Neumann algebras $(\M_n \subseteq \M \rtimes G)_{n=1}^\infty$ such that
    \[ H^{* \leq \alpha_0}_r(\rho \| \eta) = \lim_{n \rightarrow \infty} H^{*\leq \alpha_0}_r(\E_n(\hat{\rho}) \| \hat{\eta}) \pl, \]
    where $\hat{\rho}$ denotes the canonical embedding of $\rho$ into $(\M \rtimes G)_*$.
\end{lemma}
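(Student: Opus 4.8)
\emph{Plan and setup.} The plan is to run the Haagerup reduction for $(\M,\eta)$, transport the order assumption and the relevant divergences along the canonical embeddings, and then upgrade the uniform convergence of $Q^*_p$ supplied by Lemma \ref{lem:haagerupconv} to convergence of the truncated Hoeffding anti-divergence, using that a uniform limit of functions commutes with taking suprema. Concretely, apply Theorem \ref{thm:haagerup} to $(\M,\eta)$ to obtain the crossed product $\M\rtimes G$, the faithful normal conditional expectation $\E:\M\rtimes G\to\M$ with $\hat\eta=\eta\circ\E$, an increasing sequence of type $II_1$ (hence finite) subalgebras $\M_n\subseteq\M\rtimes G$, and $\hat\eta$-preserving normal conditional expectations $\E_n:\M\rtimes G\to\M_n$ with $\E_n(\psi)\to\psi$ in $(\M\rtimes G)_*$ for every normal state $\psi$. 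Write $\hat\rho=\rho\circ\E$. Since $\E$ is positive, Equation \eqref{eq:ordassump} lifts to $a\hat\rho\le\hat\eta$, and applying the positive, $\hat\eta$-preserving maps $\E_n$ gives $a\,\E_n(\hat\rho)\le\E_n(\hat\eta)=\hat\eta$ for every $n$. Hence $\hat\rho$ and every $\E_n(\hat\rho)$ obey the order assumption against $\hat\eta$, so every $D^*_\alpha(\,\cdot\,\|\hat\eta)$ occurring below lies in $[0,a^{-1}]$, uniformly in $n$ and $\alpha$.

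\emph{Invariance under the embeddings.} I would then record that $D^*_\alpha(\hat\rho\|\hat\eta)=D^*_\alpha(\rho\|\eta)$ for all $\alpha\in(1,\infty]$, and more generally that $D^*_\alpha(\E_n(\hat\rho)\|\hat\eta)$ — equivalently $Q^*_\alpha(\E_n(\hat\rho)\|\hat\eta)$ — is the same whether computed in $\M_n$ or in $\M\rtimes G$. For the first identity: restriction of normal states along the inclusion $\M\subset\M\rtimes G$ is a channel carrying $(\hat\rho,\hat\eta)$ to $(\rho,\eta)$, because $\E$ restricts to the identity on $\M$, so data processing for the sandwiched R\'enyi divergence \cite{jencova_renyi_2018} gives $D^*_\alpha(\rho\|\eta)\le D^*_\alpha(\hat\rho\|\hat\eta)$; conversely $\mu\mapsto\mu\circ\E$ is a channel $\M_*\to(\M\rtimes G)_*$ whose Heisenberg adjoint is the unital UCP map $\E$ and which carries $(\rho,\eta)$ to $(\hat\rho,\hat\eta)$, giving the reverse inequality. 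The same argument with $\E_n$ and the inclusion $\M_n\subset\M\rtimes G$ gives the second claim. In particular $H^{*\le\alpha_0}_r(\rho\|\eta)=H^{*\le\alpha_0}_r(\hat\rho\|\hat\eta)$, so it remains to prove $H^{*\le\alpha_0}_r(\E_n(\hat\rho)\|\hat\eta)\to H^{*\le\alpha_0}_r(\hat\rho\|\hat\eta)$.

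\emph{From uniform convergence of $Q^*_p$ to the conclusion.} Apply Lemma \ref{lem:haagerupconv} inside $\N=\M\rtimes G$ with state $\hat\eta$, conditional expectations $\E_n$, and state $\hat\rho$ (admissible by the previous paragraph together with $\E_n(\hat\rho)\to\hat\rho$): this gives $Q^*_p(\E_n(\hat\rho)\|\hat\eta)\to Q^*_p(\hat\rho\|\hat\eta)$ uniformly for $p\in[1,\alpha_0]$. Since $\ln Q^*_p(\,\cdot\,\|\hat\eta)=\tfrac{p-1}{p}D^*_p(\,\cdot\,\|\hat\eta)\in[0,a^{-1}]$, all the values $Q^*_p$ lie in the fixed interval $[1,e^{a^{-1}}]$, on which $\ln$ is $1$-Lipschitz, so $\ln Q^*_p(\E_n(\hat\rho)\|\hat\eta)\to\ln Q^*_p(\hat\rho\|\hat\eta)$ uniformly on $[1,\alpha_0]$ as well. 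By Remark \ref{rem:hoeffdingequiv}, $H^{*\le\alpha_0}_r(\,\cdot\,\|\hat\eta)=\sup_{\alpha\in(1,\alpha_0]}\big(\tfrac{\alpha-1}{\alpha}r-\ln Q^*_\alpha(\,\cdot\,\|\hat\eta)\big)$, and these suprema are finite (at most $r$, since $\ln Q^*_\alpha\ge0$). Using $|\sup_S g_n-\sup_S g|\le\sup_S|g_n-g|$ for any $S$, with $g_n(\alpha)=\tfrac{\alpha-1}{\alpha}r-\ln Q^*_\alpha(\E_n(\hat\rho)\|\hat\eta)$ and $g$ the analogous function for $\hat\rho$, the uniform convergence $g_n\to g$ on $(1,\alpha_0]$ yields $H^{*\le\alpha_0}_r(\E_n(\hat\rho)\|\hat\eta)\to H^{*\le\alpha_0}_r(\hat\rho\|\hat\eta)=H^{*\le\alpha_0}_r(\rho\|\eta)$. (One may instead use the compact domain $[1,\alpha_0]$: each $g_n$ and $g$ extend continuously to $\alpha=1$ with value $0$ since $D^*_\alpha$ is bounded, so the supremum is attained as a maximum.)

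The load-bearing input is the \emph{uniform} — not merely pointwise — convergence in Lemma \ref{lem:haagerupconv}; granting that, the remaining work is the bookkeeping of the canonical embeddings, so that the order assumption and all divergences transport correctly among $\M$, $\M_n$, and $\M\rtimes G$, and the essentially formal interchange of a uniform limit with the supremum over $\alpha$.
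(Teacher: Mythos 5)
Your proposal is correct and follows essentially the same route as the paper: invoke the Haagerup reduction, feed the $\hat\eta$-preserving expectations into Lemma \ref{lem:haagerupconv} to get uniform convergence of $Q^*_\alpha$ on $[1,\alpha_0]$, pass through the (Lipschitz) logarithm, interchange the uniform limit with the supremum over $\alpha$, and identify $H^{*\le\alpha_0}_r(\hat\rho\|\hat\eta)$ with $H^{*\le\alpha_0}_r(\rho\|\eta)$ via the canonical embedding. Your two-sided data-processing argument for that last identification is a slightly more explicit justification than the paper's appeal to faithfulness of the embedding, but it is the same proof in substance.
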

\begin{proof}
    Recall Theorem \ref{thm:haagerup} and herein use the notation therein. Let $\rho_n := \E_n(\hat{\rho})$, and recall that 
    \[ H^{* \leq \alpha_0}_r(\rho \| \eta) 
        = \sup_{\alpha \in (1, \alpha_0]} \Big ( \frac{\alpha - 1}{\alpha} r - \ln Q^*_\alpha(\rho \| \eta) \Big ) \pl. \]
    By Lemma \ref{lem:haagerupconv}, $\lim_{n \rightarrow \infty} Q_{\alpha}^*(\rho_n \| \eta_n) = Q_{\alpha}^*(\hat{\rho} \| \eta)$ uniformly on $\alpha$ in the compact interval $[1, \alpha_0]$. Equation \eqref{eq:ordassump} implies that $Q_{\alpha}^* \in (0, \infty)$, so the logarithm is continuous. Therefore,
    \[ \lim_{n \rightarrow \infty} H_r^{* \leq \alpha_0}(\E_n(\rho) \| \hat{\eta}) = H_r^{* \leq \alpha_0}(\hat{\rho} \| \hat{\eta}) \pl. \]
    Furthermore, because the embedding from $\M_*$ into $(\M \rtimes G)_*$ is faithful,
    \[ H_r^{* \leq \alpha_0}(\hat{\rho} \| \hat{\eta}) = H_r^{* \leq \alpha_0}(\rho \| \eta) \pl. \]
    
\end{proof}

\subsection{Approximation via Finite Spectrum} \label{sec:approx-finite-spec}
The main idea of this Section is to approximate densities in the crossed product and approximating type $II_1$ sequence by densities with finite spectrum. When densities have finite spectrum, we can apply the method of types \cite{hiai_proper_1991, hayashi_optimal_2002}, one of the foundational tools of information theory.
\begin{rem}
    In a finite von Neumann algebra, there is a canonical, invertible map between a state $\rho$ and its density $d_\rho$ with respect to the finite trace. In this section and thereafter, we interpret expressions that mix states with densities as applying the canonical mapping where needed. For example, we may interpret $D_\alpha^*(\rho \| d_\eta)$ as $D_\alpha^*(d_\rho \| d_\eta)$. It also holds by data processing and the invertibility of this map that $D_\alpha^*(\rho \| \eta) = D_\alpha^*(d_\rho \| d_\eta)$ for all $\rho, \eta$.
\end{rem}
\begin{lemma} \label{lem:finitespec}
Let $\delta< d_{\eta} <\delta^{-1}$ be a bounded density in a finite von Neumann algebra. Then there exists a sequence of conditional expectations $(F_k)_{k=1}^\infty$ such that $(1+1/k)^{-1} d_{\eta} \leq F_k(d_\eta) \leq (1+1/k) d_{\eta}$ for sufficiently large $k$,
\[ |  D_p^*(\rho \| \eta) - D_p^*(d_\rho \| F_k(d_\eta)) | \kl \frac{1}{k} \]
for all $p \in (1, \infty]$, and
 \[ |H_r^*(\rho\|\eta)-H_r^*(d_\rho\|F_k(d_\eta))|\kl \frac{1}{k} \pl .\]
For fixed $k$, $|\text{spec}(F_k(d_\eta))|\kl 2k \delta^{-2}$, where $\text{spec}(X)$ denotes the spectrum of an operator $X$.
\end{lemma}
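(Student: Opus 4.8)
The plan is to discretize the spectrum of $d_\eta$ on a multiplicative grid and let $F_k$ be the trace-preserving conditional expectation onto the finite-dimensional algebra spanned by the resulting spectral projections. Work in a finite von Neumann algebra $\N$ with normal faithful trace $\tau$; since $\delta\id\le d_\eta\le\delta^{-1}\id$, the spectrum of $d_\eta$ lies in $[\delta,\delta^{-1}]$. Fix $k\ge 1$ and set $I_j:=(\delta(1+1/k)^{j-1},\,\delta(1+1/k)^{j}]$ for $j=1,\dots,N_k$, where $N_k$ is the least integer with $\delta(1+1/k)^{N_k}\ge\delta^{-1}$; using $\log(1+1/k)\ge\tfrac1{2k}$ and $\log x\le x-1$ one gets $N_k\le\log(\delta^{-2})/\log(1+1/k)+1\le 2k\log(\delta^{-2})+1\le 2k\delta^{-2}$. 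Let $e_j:=\chi_{I_j}(d_\eta)$, drop those equal to $0$, let $\B_k\cong\CC^{N_k}$ be the abelian algebra they generate, and let $F_k:\N\to\B_k$, $F_k(x)=\sum_j\tau(e_j)^{-1}\tau(e_jxe_j)\,e_j$, be the $\tau$-preserving conditional expectation. Then $F_k(d_\eta)=\sum_j c_je_j$ with $c_j:=\tau(e_jd_\eta)/\tau(e_j)\in\overline{I_j}$: this is a density (as $F_k$ preserves $\tau$), it is the density of $\eta\circ F_k$, and it has at most $N_k\le 2k\delta^{-2}$ distinct eigenvalues. Being a function of $d_\eta$, $F_k(d_\eta)$ commutes with $d_\eta$, and on the range of each $e_j$ both $c_j$ and the spectral values of $d_\eta$ lie in $\overline{I_j}$, whose endpoints differ by a factor $\le 1+1/k$; hence $(1+1/k)^{-1}c_je_j\le d_\eta e_j\le(1+1/k)c_je_j$, and summing over $j$ gives $(1+1/k)^{-1}d_\eta\le F_k(d_\eta)\le(1+1/k)d_\eta$ for every $k\ge 1$.

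For the divergence estimates I would use two standard properties of the sandwiched R\'enyi divergence, valid for all $p\in(1,\infty]$ (with $D_\infty^*$ the max-relative entropy): it is non-increasing in its second argument for the operator order (i.e.\ $\sigma_1\le\sigma_2\Rightarrow D_p^*(\rho\|\sigma_2)\le D_p^*(\rho\|\sigma_1)$), and $D_p^*(\sigma\|c\,\omega)=D_p^*(\sigma\|\omega)-\log c$ for scalars $c>0$. Combined with the operator sandwich above and the identity $D_p^*(\rho\|\eta)=D_p^*(d_\rho\|d_\eta)$, these give
\[ D_p^*(d_\rho\|d_\eta)-\log(1+1/k)\;\le\; D_p^*(d_\rho\|F_k(d_\eta))\;\le\; D_p^*(d_\rho\|d_\eta)+\log(1+1/k), \]
so $|D_p^*(\rho\|\eta)-D_p^*(d_\rho\|F_k(d_\eta))|\le\log(1+1/k)\le 1/k$, uniformly in $p$. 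For the Hoeffding anti-divergence, set $f_\alpha:=\tfrac{\alpha-1}{\alpha}\big(r-D_\alpha^*(\rho\|\eta)\big)$ and $g_\alpha:=\tfrac{\alpha-1}{\alpha}\big(r-D_\alpha^*(d_\rho\|F_k(d_\eta))\big)$; then $|f_\alpha-g_\alpha|=\tfrac{\alpha-1}{\alpha}\,|D_\alpha^*(\rho\|\eta)-D_\alpha^*(d_\rho\|F_k(d_\eta))|\le 1/k$ for all $\alpha>1$, whence $|H_r^*(\rho\|\eta)-H_r^*(d_\rho\|F_k(d_\eta))|=|\sup_\alpha f_\alpha-\sup_\alpha g_\alpha|\le\sup_\alpha|f_\alpha-g_\alpha|\le 1/k$; the suprema lie in $[0,r]$ and the $D_\alpha^*$ involved are finite under Equation \eqref{eq:ordassump}, since $a\,d_\rho\le d_\eta$ forces $\tfrac{a}{1+1/k}\,d_\rho\le F_k(d_\eta)$.

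The construction and the transfers through $\log$ and $\sup$ are routine; the one input deserving care is the monotonicity of $D_p^*$ in its second argument for the operator order in the Kosaki-$L_p$/finite-trace setting. I would deduce it from the data processing inequality for $D_p^*$ \cite{jencova_renyi_2018}: if $\sigma\le\omega$, apply DPI to the coarse-graining channel $\N\oplus\N\to\N$ summing the two blocks, under which $\rho\oplus 0\mapsto\rho$ and $\sigma\oplus(\omega-\sigma)\mapsto\omega$, together with $D_\alpha^*(\rho\oplus 0\,\|\,\sigma\oplus(\omega-\sigma))=D_\alpha^*(\rho\|\sigma)$; alternatively one can argue directly from the integral representation of $D_p^*$. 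The endpoint $D_\infty^*(\sigma\|\omega)=\log\|\omega^{-1/2}\sigma\,\omega^{-1/2}\|_\infty$ is elementary and is reached as $p\to\infty$ by Lemma \ref{lem:qlim}. Since the error $\log(1+1/k)$ depends on neither $p$ nor $r$, no extra uniformity argument is needed, and dropping zero spectral projections only decreases $N_k$, so the bound $|\mathrm{spec}(F_k(d_\eta))|\le 2k\delta^{-2}$ holds as claimed.
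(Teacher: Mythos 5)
Your proposal is correct and follows essentially the same route as the paper: discretize the spectrum of $d_\eta$ on a multiplicative grid of ratio $1+1/k$, take the conditional expectation onto the algebra generated by the spectral projections, deduce the two-sided operator comparison $(1+1/k)^{-1}d_\eta \le F_k(d_\eta)\le (1+1/k)d_\eta$, transfer this to a uniform $\log(1+1/k)\le 1/k$ error in $D_p^*$ and hence in $H_r^*$ via the supremum, and count at most $O(k\delta^{-2})$ eigenvalues. The only (harmless) difference is that the paper gets the divergence comparison directly from a Kosaki-norm estimate ($d_1\le Kd_2$ implies a $K^{1/p'}$ norm bound, via operator monotonicity), whereas you invoke order monotonicity of $D_p^*$ in the second argument derived from data processing plus the scaling identity — both yield the same $1/k$ bound.
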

\begin{proof}
    Recall the spectral decomposition
    \[ d_\eta(\mu) = \int_0^\infty x d \mu(x) \pl. \]
    Let $c > 1$, and define
    \[ d_a^- = \delta \sum_{j=0}^k a^j 1_{[\delta a^j, \delta a^{j+1})}(d_\eta), \pla d^+_a = \delta \sum_{j=0}^k a^{j+1} 1_{[\delta a^j, \delta a^{j+1})}(d_\eta) \]
    for some $k \in \NN$. Thanks to the boundedness assumption, only finitely many of the spectral projections $e_j := 1_{[\delta a^j, \delta a^{j+1})}$ are non-trivial, so a finte number of $(a_j)$ suffice. Then set $a$ to be the smallest for which
    \[ d_a^- \leq d_\eta \leq d_a^+ \leq a d_a^- \pl. \]
     Let $M_a = \text{span}(\{ e_j \} )$ be the commutative, finite subalgebra generated by these projections and $F_a$ be the conditional expectation to that subalgebra. Then $d_a^- \leq F_a(d_\eta) \leq d_a^+$, so $a^{-1} F_a(d_\eta) \leq d_\eta \leq a F_a( d_\eta)$. For any densities $g \geq 0$ and $d > 0$,
    \[ \|d^{-1/2p'} g d^{-1/2p'} \|_p = \|g^{1/2} d^{-1/2p'} \|_{2p}^2 = \|g^{1/2} d^{-1/p'} g^{1/2} \|_p \pl. \]
    Now we note for any $d_1$ and $d_2$ that  $d_1\le Kd_2$ implies  
     \[ \|g^{1/2}d_2^{-1/p'}g^{1/2}\|_p \kl K^{1/p'} \|g^{1/2}d_1^{-1/p'}g^{1/2}\|_p \pl .\] 
    In our situation, we deduce
     \[  \|d_\eta^{-1/2p'} g d_\eta^{-1/2p'}\|_p^p \kl a^{p/p'}
     \|F_a(d_\eta)^{-1/2p'} g F_a(d_\eta)^{-1/2p'}\|_p^p \kl a^{2p/p'} \|d_\eta^{-1/2p'} g d_\eta^{-1/2p'}\|_p^p \pl .\] 
    Taking the logarithm,
     \[ \big |\ln \|d^{-1/2p'}gd^{-1/2p'}\|_p -
     \ln \|F_a(d_\eta)^{-1/2p'}g F_a(d_\eta)^{-1/2p'}\|_p \big | 
     \kl \frac{1}{p'} \ln a \pl .\] 
    Recall that $D_p^*(\rho \| \eta) = p' \ln Q_p^*(\rho \| \eta)$. Therefore,
    \[ \big | D_p^*(\rho \| \eta) -D_p^*(d_\rho \| F_a(d_\eta)) \|_p \big | 
     \kl \ln a \]
    Moreover, for $p=\al \in (1, \infty]$, we see that $\frac{1}{p'}=\frac{\al-1}{\al} \leq 1$. Now, we may choose $a=1+\frac{1}{k}$, so $\ln a \leq \frac{1}{k}$. Setting $g = \rho = \nu \eta^{1/2}$ and $d = d_\phi$ as in Remark \ref{rem:sandwich},
    \[ \big |\ln Q_{\alpha}^*(\rho \| \eta) -
     \ln Q_{\alpha}^*(\rho \| F_{1/k}(\eta)) \big | 
     \kl \frac{1}{k} \pl .\]
     An upper bound for $H_r^*(\rho \| F_k(\eta))$ over $H_r^*(\rho \| \eta)$ follows from assuming that $\ln Q_{\alpha}^*(\rho \| F_k(\eta))$ is $1/k$ smaller than $\ln Q_{\alpha}^*(\rho \| \eta)$ for the value of $\alpha$ already achieving the supremum as in Remark \ref{rem:hoeffdingequiv}. By similar argument, the most negative decrease is $1/k$. Hence $1/k$ bounds the $H_r^*$ difference.

    According to our boundedness assumption, $|\si(F_a(d_\phi))|\le j_2-j_1$ for some $j_1, j_2 \in \NN$ such that $a^{-j_1}\le \delta$ and $a^{j_2}\le \delta^{-1}$. This means
     \[ (j_2-j_1)\ln a \lel \delta^{-2} \pl .\] 
    For our choice, $j_2-j_1\le 2 k\delta^{-2}$.
\end{proof}

The next lemma is a version of the well-known method of types in (quantum) information theory. For many copies of a density with finite spectrum, though the dimension scales exponentially in the number of copies, the number of eigenvalues scales only polynomially \cite{hiai_proper_1991, hayashi_optimal_2002}. We also use the cp-order index, equivalent for finite-dimensional conditional expectations to the Pimsner-Popa index \cite{pimsner_entropy_1986}, given by
\[
    C_{cp}(\E \| \F) = \inf \{ c | c (\F \otimes \text{Id})(\rho) \geq (\E \otimes \text{Id})(\rho) \} \pl,
\]
where the identity is taken in an algebra of the same type as projections $\E$ and $\F$. The Pimsner-Popa index is in turn a finite-dimensional analog of the Jones index \cite{jones_index_1983}.

\begin{lemma} \label{lem:types}
Let $N$ be a finite von Neumann algebra and $d=\sum_{j=1}^K d_j e_j$ a finite density with finite spectrum, where $(e_j)_{j=1}^K$ is a family of projections for which $\sum_{j=1}^K e_j = 1$. Then...
\begin{enumerate}
\item[i)] If $F$ is the conditional expectation onto $\braket{d}$, the algebra of operators that commute with $d$, then $F$ has cp-order index  $\le K$ with respect to the identity.
\item[ii)] The conditional expectation onto $\braket{d^{\ten n}}$ has cp-order index $\le (n+1)^K$. 
\item[iii)]   
 \begin{align*}
   H_{r}(\rho|d) \lel \frac{1}{n} H_{rn}(\rho^{\ten n}|d^{\ten n}) 
 &\leq  \frac{1}{n} H_{rn}(F_n(\rho^{\ten n})|d^{\ten n}) \\
 &\leq \frac{1}{n} H_{rn}(\rho^{\ten n}|d^{\ten n})+ K \frac{\log (n+1)}{n}  \\
 & \lel H_r(\rho|d) + K \frac{\log (n+1)}{n} \pl . 
  \end{align*} 
\end{enumerate}
\end{lemma}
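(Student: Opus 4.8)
The plan is to establish the three items in turn, with (iii) a formal consequence of (ii) once one invokes the data processing inequality and additivity of the sandwiched R\'enyi divergence under tensor powers.

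\textbf{Item (i).} I would first make the conditional expectation explicit. Because $d=\sum_{j=1}^K d_j e_j$ with the $e_j$ mutually orthogonal projections summing to $\id$ (so the $e_j$ are precisely the spectral projections of $d$), the relative commutant $\braket{d}=\{d\}'\cap N$ equals $\bigoplus_j e_j N e_j$, and the trace-preserving conditional expectation onto it is the pinching $F(x)=\sum_j e_j x e_j$. Setting $\omega=e^{2\pi i/K}$ and $U_\ell=\sum_{j=1}^K\omega^{j\ell}e_j$ for $\ell=0,\dots,K-1$, a one-line computation gives $\frac1K\sum_{\ell=0}^{K-1}U_\ell x U_\ell^*=F(x)$. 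Since $U_0=\id$ and each $x\mapsto U_\ell xU_\ell^*$ is completely positive, the map $KF-\text{Id}=\sum_{\ell=1}^{K-1}U_\ell(\cdot)U_\ell^*$ is completely positive; hence $K(F\otimes\text{Id})(\rho)\ge(\text{Id}\otimes\text{Id})(\rho)$ for every positive $\rho$, which is exactly $C_{cp}(\text{Id}\|F)\le K$.

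\textbf{Item (ii).} Here the content is the combinatorics behind the method of types \cite{hiai_proper_1991,hayashi_optimal_2002}. The eigenvalue of $d^{\ten n}$ on a minimal product projection $e_{j_1}\ten\cdots\ten e_{j_n}$ is $\prod_i d_{j_i}$, which depends only on the count vector $(n_1,\dots,n_K)\in\{0,\dots,n\}^K$ with $\sum_j n_j=n$; there are at most $(n+1)^K$ such vectors, so $d^{\ten n}$ has at most $(n+1)^K$ distinct eigenvalues. Its spectral projections $\{g_s\}_s$ thus number at most $(n+1)^K$, one has $\braket{d^{\ten n}}=\bigoplus_s g_s N^{\ten n}g_s$, the conditional expectation onto it is the pinching $\sum_s g_s(\cdot)g_s$, and running the argument of (i) with the $g_s$ in place of the $e_j$ bounds its cp-order index by $(n+1)^K$.

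\textbf{Item (iii), and the main obstacle.} Let $F_n$ denote the conditional expectation onto $\braket{d^{\ten n}}$. Additivity $D_\alpha^*(\rho^{\ten n}\|d^{\ten n})=nD_\alpha^*(\rho\|d)$ together with the definition of $H_r$ gives the first and last equalities $\tfrac1n H_{rn}(\rho^{\ten n}|d^{\ten n})=H_r(\rho|d)$. Since $d^{\ten n}\in\braket{d^{\ten n}}$ we have $F_n(d^{\ten n})=d^{\ten n}$, so the data processing inequality for $D_\alpha^*$ \cite{jencova_renyi_2018} under the channel $F_n$ gives $D_\alpha^*(F_n(\rho^{\ten n})\|d^{\ten n})\le D_\alpha^*(\rho^{\ten n}\|d^{\ten n})$; multiplying by $(\alpha-1)/\alpha>0$ and taking the supremum over $\alpha>1$ yields the second inequality. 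For the third, item (ii) gives the operator inequality $(n+1)^K F_n(\rho^{\ten n})\ge\rho^{\ten n}$; since $Q_\alpha^*(\cdot\|d^{\ten n})$, and hence $D_\alpha^*(\cdot\|d^{\ten n})$, is monotone in its positive first argument for $\alpha>1$ and satisfies $D_\alpha^*(v\sigma\|\tau)=\tfrac{\alpha}{\alpha-1}\log v+D_\alpha^*(\sigma\|\tau)$, one gets $D_\alpha^*(\rho^{\ten n}\|d^{\ten n})\le\tfrac{\alpha}{\alpha-1}K\log(n+1)+D_\alpha^*(F_n(\rho^{\ten n})\|d^{\ten n})$; rearranging, multiplying by $(\alpha-1)/\alpha$, taking the supremum over $\alpha$, and dividing by $n$ gives the penultimate line. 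The one point needing care is exactly this last step: one must use the \emph{unnormalized} convention for $D_\alpha^*$ on positive operators — the one for which $Q_\alpha^*$ is homogeneous of degree one — so that the rescaling contribution $\tfrac{\alpha}{\alpha-1}K\log(n+1)$ cancels the Hoeffding weight $(\alpha-1)/\alpha$ and the error stays $K\log(n+1)$ uniformly in $\alpha$ rather than diverging as $\alpha\to1$. Monotonicity of $Q_\alpha^*$ in the first argument and additivity under tensor powers in a general finite von Neumann algebra both descend from the corresponding properties of the Haagerup/Kosaki $L_p$-norms.
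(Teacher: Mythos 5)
Your proof is correct and follows essentially the same route as the paper: pinch onto the spectral (type-class) blocks, bound the cp-order index by the number of blocks, count the types by $(n+1)^K$, and transfer the bound to $H_r$ via data processing for $F_n$ (which fixes $d^{\otimes n}$) together with operator-order monotonicity of $Q_\alpha^*$ in the first argument, with the resulting error $K\log(n+1)$ uniform in $\alpha$ exactly as you note. The only real variation is in item (i), where your twirling identity $\frac{1}{K}\sum_{\ell} U_\ell x U_\ell^* = F(x)$ gives complete positivity of $KF - \mathrm{Id}$ directly, while the paper argues via Cauchy--Schwarz on vector states (which gives the positivity order and then amplifies); both are standard and yield the same index bound.
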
      

\begin{proof} The conditional expectation is given by  
 \[ F(x) \lel \sum_j e_jxe_j \pl ,\] 
Let $x\gl 0$, $N\subset \mathbb{B}(H)$, and $h$ be any vector in $H$. We note the subspace decomposition $h = \sum_{j=1}^K e_j h$. Therefore,
  \begin{align*} 
  (h,xh) & \lel \sum_{j,j'} (x^{1/2}e_jh,x^{1/2}e_{j'}h) \\
  &\le \sum_{j,j'} \|x^{1/2}e_j h\|\|x^{1/2}e_{j'} h\| 
  \lel \Big ( \sum_{j=1}^K (h,e_jxe_jh)^{1/2} \Big )^2 \\ 
  &\kl K \sum_j (h,F(x)h) \pl .
  \end{align*}
As the inequality holds for arbitrary $h \in H$, it implies that $K F(x) \geq_{cp} x$, proving the first assertion. For the second assertion, we observe that, thanks to commutativity  
 \[ d^{\ten n} \lel \sum_{j_1+\cdots+ j_K=n} d_1^{j_1}\cdots d_K^{j_K} \sum_{\#{l|i_l=1}=j_1,...\#{l|i_l=K}=j_k}   e_{i_1}\ten \cdots \ten e_{i_n}  \pl.  
 \]   
In $\rz^K$ we consider the simplex $S_{n+1}=\{(x_1,...,x_K)| 0 \leq x_j, \sum_j x_j\le n+1\}$. This contains our discrete points and an $p+\stackrel{\circ}{S}_1$, where $\stackrel{\circ}{S}_1$ denotes the interior, around it. Thus
 \[ \# \{(j_1,...,j_k) |\sum_l j_l=n\} \kl \frac{vol(S_{n+1})}{vol(S_{1})} \lel (n+1)^K \pl .
 \] 
For the last assertion, we fix $K$, $d$ and $\nen$ and the conditional expectation $F$ onto $\{d^{\ten_n}\}'$ and use the notation  $\rho_n=\rho^{\ten_n}$, $d_n=d^{\ten_n}$. We recall that 
 \[ Q_{\al}^*(F(\rho_n) \| F(d_n))\kl Q_{\al}^*(\rho_n \| d_n) \pl . \] 
On the other hand $\rho_n \le {\rm Ind}_n F(\rho_n)$ implies 
 \[ \|d_n^{-1/2p'}\rho_nd_n^{-1/2p'}\|_p \kl {\rm Ind} 
 \|d_n^{-1/2p'}F(\rho_n) d_n^{-1/2p'}\|_p \pl ,\] 
where ${\rm Ind} = (n+1)^K$. This implies
 \[ \ln Q_{\al}^*(F(\rho_n) \| d_n) \kl \ln Q_{\al}^*(\rho_n \| d_n)
 \kl \ln Q_{\al}^*(F(\rho_n) \| d_n)+ K \ln (n+1)  \pl.    \] 
Since $Q_\alpha^*$ is uniformly bounded in $\alpha$, $H_r^*$ is correspondingly bounded in the reverse direction.
\end{proof}

\subsection{Finding Tests on the Approximating States}
Once densities are approximated via operators with finite spectrum, what remains is to show that tests in one setting transfer to or are replaced in another. One such transference is from the approximating finite von Neumann algebras to the crossed product and ultimately the original algebra. Another is between various many-copy limits.
\begin{rem} \label{rem:br-lim-sup}
We will work with
 \begin{align*} \hat{B}_r(\rho\|\eta)&= \inf\{R| \exists_{n_0} 0\le T_n\le 1| \eta^{\ten_n}(T_n)\le e^{-rn} \rho^{\ten_n}(T_n)\gl e^{-Rn} \} \\
 &=  
 \limsup_{n} \inf\{\frac{-\log \rho(T_n)}{n}| \eta^{\ten_n}\pl ,\pl (T_n)\le e^{-rn}\} \pl .
 \end{align*}
Mosoyni-Ogawa's definition as in \cite[Equation (43)]{mosonyi_quantum_2015} and our Equation \eqref{eq:Br} is slightly different 
 \[ B_r(\rho\|\eta) \lel \inf \bigg\{ R | \exists (T_n)  \limsup \frac{\log\eta^{\ten_n}(T_n)}{n}\kl  -r \limsup_n \frac{\log \rho^{\ten_n}(T_n)}{n} \gl -R \bigg \} \pl.\] 
Clearly, our assumption $\eta^{\ten_n}(T_n)\le e^{-rn}$ is stronger than $\limsup_n \frac{\log \eta^{\ten_n}}{n}\le -r$. However, given $\eps>0$ we can find $T_n$ and $n_1$ such that $\eta^{\ten_n}(T_n)\le e^{-(r-\eps)n}$ holds for $n\gl n_1$ as well as and $\rho^{\ten_n}(T_n)\gl e^{-Rn}$.  We may then use $T_n'=e^{-\eps n}T_n$ satisfying the stricter condition and proving us with a $R-\eps$. Since $\eps>0$ is arbitrary, we see that 
 \[ B_r(\rho\|\eta)  \lel \hat{B}_r(\rho\|\eta) \]
and for $\hat{B}_r(\rho\|\eta)$ the limit exists as proven  above.  
\end{rem} 

\begin{prop} \label{prop:Br-alpha}
    If $\rho$ and $\eta$ are states satisfying Equation \eqref{eq:ordassump}, then Equation \eqref{eq:bralpha} stating that 
    \begin{equation*}
        B_r(\rho\|\eta) 
     = - \lim_{n \rightarrow \infty} \frac{1}{n} \log \Big \{ 1 - \alpha_{e^{-nr}}^* (\rho^{\otimes n} \| \eta^{\otimes n}) \Big \}
    \end{equation*}
    holds, and the limit therein exists.
\end{prop}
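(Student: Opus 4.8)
The plan is to establish the identity by reformulating both sides in terms of the quantities already analyzed and then invoking Remark \ref{rem:br-lim-sup}. First I would rewrite the right-hand side explicitly: by definition of $\alpha^*_{e^{-nr}}$ and Equation \eqref{eq:1minusalpha}, we have
\[
    1 - \alpha^*_{e^{-nr}}(\rho^{\otimes n} \| \eta^{\otimes n}) = \max_{0 \le T_n \le 1} \big\{ \rho^{\otimes n}(T_n) : \eta^{\otimes n}(T_n) \le e^{-nr} \big\},
\]
so that $-\frac{1}{n}\log\{1 - \alpha^*_{e^{-nr}}(\rho^{\otimes n}\|\eta^{\otimes n})\}$ equals $\inf\{ \frac{-\log \rho^{\otimes n}(T_n)}{n} : \eta^{\otimes n}(T_n) \le e^{-nr}\}$, the same infimum appearing inside the $\limsup$ in the second displayed formula for $\hat B_r(\rho\|\eta)$ in Remark \ref{rem:br-lim-sup}. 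Thus the assertion that the right-hand limit equals $B_r(\rho\|\eta)$ is exactly the assertion that (a) this sequence converges, and (b) its limit equals $\hat B_r(\rho\|\eta) = B_r(\rho\|\eta)$, the last equality being Remark \ref{rem:br-lim-sup}.

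The remaining content is therefore the existence of the limit, i.e.\ upgrading the $\limsup$ in the definition of $\hat B_r$ to an honest limit. The key step is a superadditivity (or subadditivity) argument on the sequence $a_n := -\log\{1 - \alpha^*_{e^{-nr}}(\rho^{\otimes n}\|\eta^{\otimes n})\}$. Given near-optimal tests $T_m$ on $m$ copies and $T_\ell$ on $\ell$ copies with $\eta^{\otimes m}(T_m) \le e^{-mr}$ and $\eta^{\otimes \ell}(T_\ell) \le e^{-\ell r}$, the product test $T_m \otimes T_\ell$ on $m + \ell$ copies satisfies $\eta^{\otimes(m+\ell)}(T_m\otimes T_\ell) \le e^{-(m+\ell)r}$ and $\rho^{\otimes(m+\ell)}(T_m\otimes T_\ell) = \rho^{\otimes m}(T_m)\,\rho^{\otimes \ell}(T_\ell)$, using that $\rho^{\otimes(m+\ell)} = \rho^{\otimes m}\otimes\rho^{\otimes\ell}$ and similarly for $\eta$ on $\M^{\bar\otimes(m+\ell)} = \M^{\bar\otimes m}\,\bar\otimes\,\M^{\bar\otimes\ell}$. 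Hence $a_{m+\ell} \le a_m + a_\ell$, so $(a_n)$ is subadditive and Fekete's lemma gives $\lim_n a_n/n = \inf_n a_n/n$, which in particular exists (possibly $+\infty$, though Equation \eqref{eq:ordassump} and $D^*_\infty(\rho\|\eta) \le a^{-1}$ keep it finite once $r$ is in the relevant range; for $r$ large the value is just $D^*_\infty$ and one checks finiteness directly). Since the limit exists it coincides with the $\limsup$, completing the identification with $\hat B_r(\rho\|\eta)$.

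The main obstacle I anticipate is bookkeeping rather than conceptual: one must be careful that the Hoeffding-type optimization $\alpha^*_{e^{-nr}}$ is defined with the \emph{strict} constraint $\eta^{\otimes n}(T_n) \le e^{-nr}$ (as in Equation \eqref{eq:1minusalpha}), so that the product construction respects the constraint exactly and no $\varepsilon$-loss is needed at this stage — the $\varepsilon$-adjustment between the two flavors of constraint is handled once and for all in Remark \ref{rem:br-lim-sup}. A secondary point is to confirm that the infimum over tests is genuinely attained (or at least that near-optimizers suffice), which follows from weak-$*$ compactness of $\TT_n = \{0 \le T_n \le 1\} \subset \M^{\bar\otimes n}$ together with weak-$*$ continuity of $T_n \mapsto \rho^{\otimes n}(T_n)$ and $T_n \mapsto \eta^{\otimes n}(T_n)$ on normal states; this justifies writing $\max$ in Equation \eqref{eq:1minusalpha} and legitimizes the subadditivity computation. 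Assembling these pieces yields both the existence of the limit and the claimed equality with $B_r(\rho\|\eta)$.
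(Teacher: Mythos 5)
Your proposal is correct, and it reaches the existence of the limit by a genuinely different route than the paper. Both arguments start the same way: rewrite $-\tfrac{1}{n}\log\{1-\alpha^*_{e^{-nr}}\}$ via Equation \eqref{eq:1minusalpha} as the per-$n$ minimum of $-\tfrac{1}{n}\log\rho^{\otimes n}(T_n)$ over feasible tests, and defer the passage from $\hat B_r$ to Mosonyi--Ogawa's $B_r$ to Remark \ref{rem:br-lim-sup}. Where you differ is the key step: you observe that $a_n:=-\log\max\{\rho^{\otimes n}(T_n):\eta^{\otimes n}(T_n)\le e^{-nr}\}$ is subadditive, because the tensor product of feasible tests is feasible ($\eta^{\otimes(m+\ell)}(T_m\otimes T_\ell)\le e^{-(m+\ell)r}$) and $\rho^{\otimes(m+\ell)}(T_m\otimes T_\ell)=\rho^{\otimes m}(T_m)\rho^{\otimes\ell}(T_\ell)$, so Fekete's lemma gives $\lim_n a_n/n=\inf_n a_n/n$; you then identify this limit with $\hat B_r$ through the limsup formula of Remark \ref{rem:br-lim-sup} (an elementary identity, and your use of it is not circular since you do not use the Remark's existence claim). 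The paper instead proves the two-sided sandwich $\limsup_n \min_n \le \hat B_r \le \liminf_n \min_n$, with the lower bound obtained by the block-and-pad construction $\tilde T_n = e^{-r(n-kn_0)}T_{n_0}^{\otimes k}\otimes \id^{\otimes(n-kn_0)}$ borrowed from \cite[Theorem 4.4]{mosonyi_quantum_2015} (the same device reused in Lemma \ref{lem:subseq}). Your route is more economical and yields the extra information $\lim_n a_n/n=\inf_n a_n/n$, so every finite $n$ already upper-bounds the asymptotic rate; the paper's route avoids any attainment question and produces the identification with $\hat B_r$ directly from the sandwich rather than via the Remark's displayed equality. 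Two small points in your write-up: attainment of the maximum (which you justify by weak-$*$ compactness of $\{0\le T\le 1\}$ and normality of $\rho^{\otimes n},\eta^{\otimes n}$) is not really needed, since $\varepsilon$-optimizers give $a_{m+\ell}\le a_m+a_\ell+2\varepsilon$ for all $\varepsilon>0$; and finiteness of $a_n$ is immediate from the feasible test $T_n=e^{-nr}\id$, giving $a_n\le nr$, so the parenthetical appeal to $D^*_\infty$ is unnecessary and somewhat muddled, though harmless.
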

\begin{proof}
    Recalling the statement of Equation \eqref{eq:bralpha} and its relation to Equation \eqref{eq:Br}, we first aim to show that
    \begin{equation}
        - \lim_{n \rightarrow \infty} \frac{1}{n}
        \log \max_{0 \leq T_n \leq 1} \big \{ \rho^{\otimes n}(T_n) : \eta^{\otimes n}(T_n) \leq e^{- n r} \big \}
    \end{equation}
    exists and equals
    \begin{equation*}
    \begin{split}
    \hat{B}_r(\rho||\eta) := \limsup_n \bigg \{-\frac{\log \rho^{\ten_n}(T_n)}{n}| \eta^{\ten_n}(T^n)\le e^{-rn} \bigg \} \pl.
    \end{split}
    \end{equation*}
    For each fixed $n \in \NN$, a chosen test $T_n$, and $R > 0$,
    \begin{equation}
        \rho^{\otimes n}(T_n) \geq e^{- n R} \pll \equiv \pll R \geq - \frac{1}{n}  \log (\rho^{\otimes n} (T_n)) \pl.
    \end{equation}
    Therefore,
    \begin{equation} \label{eq:rewritings-1}
    \begin{split}
        & - \frac{1}{n} \log \max_{0 \leq T_n \leq 1} \big \{ \rho^{\otimes n}(T_n) : \eta^{\otimes n}(T_n) \leq e^{- n r} \big \}
        \\ & = \min_{0 \leq T_n \leq 1} \Big \{ - \frac{1}{n} \log  \rho^{\otimes n}(T_n) : \eta^{\otimes n}(T_n) \leq e^{- n r} \Big \}
        \\ & = \min \Big \{ R | \exists T_n,  0 \leq T_n \leq 1, R \geq - \frac{1}{n} \log   \rho^{\otimes n}(T_n) , \eta^{\otimes n}(T_n) \leq e^{- n r} \Big \}
        \\ & = \min \big \{ R | \exists T_n,  0 \leq T_n \leq 1, \rho^{\otimes n}(T_n) \geq e^{- n R} , \eta^{\otimes n}(T_n) \leq e^{- n r} \big \} \pl.
    \end{split}
    \end{equation}
    For every $\rho, \eta, r$, by the definition of $\hat{B}_r(\rho \| \eta)$, for every $R \geq \hat{B}_r(\rho \| \eta)$, $\exists (T_n)_{n=1}^\infty$ and $n_0 \in \NN$ such that $\forall n > n_0$,
    \begin{equation*}
        \eta^{\otimes n} (T_n) \leq e^{- r n} \text{ , and } \rho^{\otimes n} (T_n) \geq e^{-R n} \pl.
    \end{equation*}
    Equivalently to the 2nd Equation above, $R \geq - \frac{1}{n}  \log (\rho^{\otimes n} (T_n))$. Therefore, for sufficiently large $n$, there exists a sequence of tests for which
    \begin{equation*}
        R \geq  - \frac{1}{n} \log  \rho^{\otimes n}(T_n) \text{ , and } \eta^{\otimes n}(T_n) \leq e^{- n r} \pl.
    \end{equation*}
    Therefore,
    \begin{equation} \label{eq:br-alpha-1}
        \hat{B}_r(\rho \| \eta) \geq \limsup_{n \rightarrow \infty} \min_{0 \leq T_n \leq 1} \Big \{ - \frac{1}{n} \log  \rho^{\otimes n}(T_n) : \eta^{\otimes n}(T_n) \leq e^{- n r} \Big \} \pl.
    \end{equation}
    For the other direction, assume that
    \begin{equation*}
        R' = \liminf_{n \rightarrow \infty} \min_{0 \leq T_n \leq 1} \Big \{ - \frac{1}{n} \log  \rho^{\otimes n}(T_n) : \eta^{\otimes n}(T_n) \leq e^{- n r} \Big \} \pl.
    \end{equation*}
    Hence $\exists n_0 \in \NN$ such that $\forall n > n_0$,
    \begin{equation*}
        \min_{0 \leq T_n \leq 1} \Big \{ - \frac{1}{n} \log  \rho^{\otimes n}(T_n) : \eta^{\otimes n}(T_n) \leq e^{- n r} \Big \} \geq R' \pl.
    \end{equation*}
    Equivalently, for every sequence of tests $(T_n)_{n=1}^\infty$ such that $\eta^{\otimes n}(T_n) \leq e^{- n r}$,
    \begin{equation*}
         - \frac{1}{n} \log  \rho^{\otimes n}(T_n)  \geq R' \text{ , equivalenlty }
         \rho^{\otimes n}(T_n) \leq e^{- n R'}
    \end{equation*}
    for large enough $n$. However, it also holds that for every $\epsilon > 0$, $\exists$ a sequence of tests $(T_n)_{n=1}^\infty$ such that $\eta^{\otimes n}(T_n) \leq e^{- n r}$,
    \begin{equation*}
        - \frac{1}{n} \log \rho^{\otimes n}(T_n) \leq R' + \epsilon
            \text{ , equivalently } \rho^{\otimes n}(T_n) \geq e^{- n (R' + \epsilon)}
    \end{equation*}
    for infinitely many values of $n$. Let $n_0$ be such a value. Subsequently, we follow a line of argument suggested in the proof of \cite[Theorem 4.4]{mosonyi_quantum_2015}. For every $k \in \NN$,
    \begin{equation*}
        \eta^{\otimes k n_0}(T_{n_0}^{\otimes k}) \leq e^{- k n_0 r} \text{ , and }
        \rho^{\otimes k n_0}(T_{n_0}^{\otimes k}) \geq e^{- k n_0 (R' + \epsilon)} \pl.
    \end{equation*}
    Now consider a sequence of tests $\tilde{T}_n$ given by
    \begin{equation*}
        \tilde{T}_n := e^{- r(n - k n_0)} T_{n_0}^{\otimes k} \otimes \id^{\otimes (n - k n_0)}
    \end{equation*}
    where $k$ is chosen as the largest for which $k n_0 \leq n$. Observe that $\tilde{T}_n(\eta^{\otimes n}) \leq e^{- r(n - k n_0)} e^{- r k n_0} = e^{- r n}$. Furthermore,
    \begin{equation*}
        \rho^{\otimes n}(\tilde{T}_n) \geq e^{- r(n - k n_0)} e^{- k n_0 (R' + \epsilon)} 
            = e^{- k n_0 (R' + \epsilon - (n / k n_0 - 1) r)} \pl.
    \end{equation*}
    Taking $n_0$ fixed for sufficiently large $k$, $n / k n_0$ becomes arbitrarily close to 1. Hence for any $\epsilon', \epsilon'' > 0$,
    \begin{equation*}
        \rho^{\otimes n}(\tilde{T}_n) \geq e^{- k n_0 (R' + \epsilon')} 
        \geq e^{- n (R' + \epsilon'')} 
    \end{equation*}
    for \textit{all} sufficiently large $n$. Hence $(\tilde{T}_n)$ is a sequence of tests for which
    \begin{equation*}
        \limsup_{n \rightarrow \infty} \eta^{\otimes n}(\tilde{T}_n) \leq e^{- r n}
            \text{ , and } \liminf_{n \rightarrow \infty} \rho^{\otimes n}(\tilde{T}_n) \geq e^{- n (R' + \epsilon'')} \pl.
    \end{equation*}
    Therefore, $\hat{B}_r(\rho \| \eta) \leq R' + \epsilon'$, so taking $\epsilon' \rightarrow 0$,
    \begin{equation} \label{eq:br-alpha-2}
        \hat{B}_r(\rho \| \eta) \leq \liminf_{n \rightarrow \infty} \min_{0 \leq T_n \leq 1} \Big \{ - \frac{1}{n} \log  \rho^{\otimes n}(T_n) : \eta^{\otimes n}(T_n) \leq e^{- n r} \Big \} \pl.
    \end{equation}
    Combining Equations \eqref{eq:br-alpha-1} and \eqref{eq:br-alpha-2},
    \begin{equation*}
    \begin{split}
        & \limsup_{n \rightarrow \infty} \min_{0 \leq T_n \leq 1}
            \Big \{ - \frac{1}{n} \log  \rho^{\otimes n}(T_n)
            : \eta^{\otimes n}(T_n) \leq e^{- n r} \Big \}
        \\ \leq &
        \liminf_{n \rightarrow \infty} \min_{0 \leq T_n \leq 1}
            \Big \{ - \frac{1}{n} \log  \rho^{\otimes n}(T_n)
            : \eta^{\otimes n}(T_n) \leq e^{- n r} \Big \} \pl,
    \end{split}
    \end{equation*}
    proving that the limit exists, and via Equation \eqref{eq:rewritings-1} that the original limits considered exist. Since $\hat{B}_r(\rho \| \eta)$ is between these, it is equal to both and hence to the limit. Then using Remark \ref{rem:br-lim-sup}, we obtain the desired equality for Mosonyi \& Ogawa's original $B_r$ as in Equation \eqref{eq:Br}.

\end{proof}

Equivalently to Proposition \ref{prop:Br-alpha},
\begin{equation}
    - \lim_{n \rightarrow \infty} \frac{1}{n} \log \Big \{ 1 - \alpha_{e^{-nr}}^* (\rho^{\otimes n} \| \eta^{\otimes n}) \Big \} \leq R_0
\end{equation}
if and only if $\exists R \leq R_0, (T_n)_{n=1}^\infty, n_0 \in \NN$ such that $\forall n > n_0$, $\eta^{\otimes n}(T_n) \leq e^{-r n}$, and $\rho^{\otimes n}(T_n) \geq e^{- R n}$.
\begin{rem}
    For every pair of states $\rho, \eta$ obeying Equation \eqref{eq:ordassump} and every $n_0 \in \NN$,
    \begin{align*}
        H_{n_0 r}^*(\rho^{\otimes n_0} \| \eta^{\otimes n_0})
            = \sup_{\alpha > 1} \frac{\alpha - 1}{\alpha} \Big( n_0 r - D_\alpha^*(\rho^{\otimes n_0} \| \eta^{\otimes n_0}) \Big)
        = n_0 H_r^*(\rho \| \eta)
    \end{align*}
    by Equation \eqref{eq:Hr} and the additivity of $D_\alpha^*$ on product states.
\end{rem}
\begin{lemma} \label{lem:subseq} Let $n_0\in \nz$ and assume that  
 \[ B_{rn_0}(\rho^{\ten n_0}\|\eta^{\ten n_0}) \geq n_0 B_{rn_0}(\rho^{\ten n_0}\|\eta^{\ten n_0}) .\]    
\end{lemma}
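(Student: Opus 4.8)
The plan is to derive the identity $B_{rn_0}(\rho^{\otimes n_0}\|\eta^{\otimes n_0}) = n_0\, B_r(\rho\|\eta)$, of which the displayed inequality is one half, as an essentially immediate consequence of Proposition~\ref{prop:Br-alpha} together with the $\limsup$ reformulation in Remark~\ref{rem:br-lim-sup}. First I would set, for any pair of states and rate,
\[
  f_{\rho,\eta,r}(n) := \inf\Big\{ -\frac1n\log\rho^{\otimes n}(T_n) \;:\; 0\le T_n\le 1,\ \eta^{\otimes n}(T_n)\le e^{-rn} \Big\},
\]
a quantity in $[0,r]$ (the lower bound since $\rho^{\otimes n}(T_n)\le 1$, the upper bound by taking $T_n=e^{-rn}\id$). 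By Equation~\eqref{eq:1minusalpha} one has $f_{\rho,\eta,r}(n) = -\frac1n\log\{1-\alpha^*_{e^{-nr}}(\rho^{\otimes n}\|\eta^{\otimes n})\}$, so by Proposition~\ref{prop:Br-alpha} the limit $\lim_n f_{\rho,\eta,r}(n)$ exists and equals $B_r(\rho\|\eta)$. The same applies to the pair $\rho^{\otimes n_0},\eta^{\otimes n_0}$ at rate $rn_0$, which is legitimate because $a\rho\le\eta$ telescopes to $a^{n_0}\rho^{\otimes n_0}\le\eta^{\otimes n_0}$; thus $\lim_m f_{\rho^{\otimes n_0},\eta^{\otimes n_0},rn_0}(m)$ exists and equals $B_{rn_0}(\rho^{\otimes n_0}\|\eta^{\otimes n_0})$.

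The one computation to carry out is that these two sequences are literally proportional along the progression $n=n_0 m$. A test $S_m$ for the $m$-copy problem built from $\rho^{\otimes n_0},\eta^{\otimes n_0}$ is an element $0\le S_m\le 1$ of $(\M^{\bar\otimes n_0})^{\bar\otimes m}=\M^{\bar\otimes n_0 m}$, with $(\eta^{\otimes n_0})^{\otimes m}(S_m)=\eta^{\otimes n_0 m}(S_m)$ and $(\rho^{\otimes n_0})^{\otimes m}(S_m)=\rho^{\otimes n_0 m}(S_m)$; the type-II constraint $e^{-(rn_0)m}=e^{-r(n_0 m)}$ is exactly the rate-$r$ constraint on $n_0 m$ copies, and $-\frac1m\log = n_0\bigl(-\frac1{n_0 m}\log\bigr)$. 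Hence $f_{\rho^{\otimes n_0},\eta^{\otimes n_0},rn_0}(m) = n_0\, f_{\rho,\eta,r}(n_0 m)$ for every $m$. Since $\lim_n f_{\rho,\eta,r}(n)$ exists, its subsequence along $n=n_0 m$ has the same limit, so
\[
  B_{rn_0}(\rho^{\otimes n_0}\|\eta^{\otimes n_0}) = \lim_m n_0\, f_{\rho,\eta,r}(n_0 m) = n_0 \lim_n f_{\rho,\eta,r}(n) = n_0\, B_r(\rho\|\eta),
\]
which in particular yields the stated inequality.

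I do not expect a genuine obstacle here: all of the analytic content --- existence of the limit and equivalence of the $\limsup$ and $\inf$ formulations --- is already packaged in Proposition~\ref{prop:Br-alpha} and Remark~\ref{rem:br-lim-sup}, so this lemma is close to a corollary. The only points needing a line of verification are the telescoping implication $a\rho\le\eta\Rightarrow a^{n_0}\rho^{\otimes n_0}\le\eta^{\otimes n_0}$ (so Proposition~\ref{prop:Br-alpha} applies to the blocked pair) and the bookkeeping matching $e^{-(rn_0)m}$ with $e^{-r(n_0 m)}$. If one preferred a proof not routed through Proposition~\ref{prop:Br-alpha}, one would obtain $n_0 B_r(\rho\|\eta)\le B_{rn_0}(\rho^{\otimes n_0}\|\eta^{\otimes n_0})$ directly via the padding construction $\tilde T_n=e^{-rs}\,S_m\otimes\id^{\otimes s}$ (with $n=n_0 m+s$, $0\le s<n_0$) from the proof of Proposition~\ref{prop:Br-alpha}, checking that the prefactor $e^{-rs}$ keeps the type-II bound tight, and the reverse inequality by restricting any admissible test sequence for $(\rho,\eta)$ to the subsequence $n=n_0 m$.
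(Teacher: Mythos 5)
Your proposal is correct, and it reaches a stronger conclusion (the equality $B_{rn_0}(\rho^{\otimes n_0}\|\eta^{\otimes n_0}) = n_0 B_r(\rho\|\eta)$) than the direction the paper actually needs, namely $B_{rn_0}(\rho^{\otimes n_0}\|\eta^{\otimes n_0}) \geq n_0 B_r(\rho\|\eta)$ --- the displayed statement in the paper is garbled (it reads as $X \geq X$), and you correctly inferred the intended content from its use. Your route differs from the paper's: the paper applies Proposition \ref{prop:Br-alpha} only to the blocked pair $(\rho^{\otimes n_0},\eta^{\otimes n_0})$ and then hand-builds the padded tests $\tilde T_n = e^{-r(n-kn_0)}T_k\otimes \id^{\otimes(n-kn_0)}$ to transfer an achieving sequence on multiples of $n_0$ to all $n$, exactly the construction you mention as an alternative at the end. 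You instead apply Proposition \ref{prop:Br-alpha} to \emph{both} pairs, so that the existence of the full limit $\lim_n f_{\rho,\eta,r}(n)$ lets you read off the blocked quantity from the subsequence identity $f_{\rho^{\otimes n_0},\eta^{\otimes n_0},rn_0}(m) = n_0 f_{\rho,\eta,r}(n_0 m)$, avoiding any padding. What your approach buys: it is shorter, yields the equality rather than one inequality, and it makes explicit two points the paper glosses over --- that Proposition \ref{prop:Br-alpha} may legitimately be invoked for the blocked pair because $a\rho\leq\eta$ tensorizes to $a^{n_0}\rho^{\otimes n_0}\leq\eta^{\otimes n_0}$, and the bookkeeping identification of the feasible test sets under $(\M^{\bar\otimes n_0})^{\bar\otimes m}=\M^{\bar\otimes n_0 m}$. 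What the paper's route buys is only that its padding argument is self-contained at the level of test constructions (though it, too, quietly uses Proposition \ref{prop:Br-alpha} on the blocked pair), so nothing essential is lost in your version.
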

\begin{proof}
    The argument of this proof resembles part of the proof of \ref{prop:Br-alpha} and uses a similar idea inspired by \cite[Theorem 4.4]{mosonyi_quantum_2015}. By Proposition \ref{prop:Br-alpha},
    \[ B_{n_0 r} (\rho^{\otimes n_0} \| \eta^{\otimes n_0}) 
        = - \lim_{k
        \rightarrow \infty} \frac{1}{k} \log \Big ( \max_{0 \leq T_k \leq 1} \{ \rho^{\otimes k n_0} (T_k) : \eta^{\otimes k n_0}(T_k) \leq e^{- k n_0 r} \} \Big ) \pl. \]
    For any given $\epsilon > 0$, assume that $(T_k)$ is an infinite sequence of tests achieving
    $\rho^{\otimes k n_0} (T_k) \geq e^{- n_0 (R + \epsilon) k}$ for sufficiently large $n$, where $n_0 R = B_{n_0 r} (\rho^{\otimes n_0} \| \eta^{\otimes n_0})$,
    while $\eta^{\otimes k n_0}(T_k) \leq e^{- k n_0 r}$
    for all sufficiently large $k$. For each $k \in \NN$ and $n \in \NN$ for which $k n_0 \leq n < (k+1) n_0$, let $\tilde{T}_{n} := e^{- r (n - k n_0)} T_k \otimes \id^{\otimes (n - k n_0)}$, where $T_k$ is a test in the aforementioned achieving family on $k n_0$ copies. For every $m \in \NN$, $\rho^{\otimes m}(\id) = \eta^{\otimes m}(\id) = 1$. Therefore, for sufficiently large $n$,
    \[ \eta^{\otimes n}(\tilde{T}_{n}) = \eta^{\otimes k n_0 + (n - k n_0)}(\tilde{T}_{n})
    = \eta^{\otimes k n_0}(T_{k}) e^{- (n - k n_0) r} \leq e^{- r n} \pl. \]
    Furthermore,
    \begin{equation*}
        \rho^{\otimes n}(\tilde{T}_n) = \rho^{\otimes k n_0 + (n - k n_0)}(\tilde{T}_{n}) \geq e^{- r(n - k n_0)} e^{- k n_0 (R + \epsilon)} 
            = e^{- k n_0 (R + \epsilon - (n / k n_0 - 1) r)} \pl.
    \end{equation*}
    Taking $n_0$ fixed for sufficiently large $k$, $n / k n_0$ becomes arbitrarily close to 1. Hence for any $\epsilon', \epsilon'' > 0$,
    \begin{equation*}
        \rho^{\otimes n}(\tilde{T}_n) \geq e^{- n_0 (R + \epsilon') k} 
        \geq e^{- n (R + \epsilon'')} 
    \end{equation*}
    for \textit{all} sufficiently large $n$. Hence $(\tilde{T}_n)$ is a sequence of tests for which
    \begin{equation*}
        \limsup_{n \rightarrow \infty} \eta^{\otimes n}(\tilde{T}_n) \leq e^{- r n}
            \text{ , and } \liminf_{n \rightarrow \infty} \rho^{\otimes n}(\tilde{T}_n) \geq e^{- n (R + \epsilon'')} \pl.
    \end{equation*}
    Taking $\epsilon''$ to zero completes the Lemma.
\end{proof}

Results in Subsections \ref{sec:approx-finite-alg} and \ref{sec:approx-finite-spec} primarily showed that one may replace a pair of given states $\rho$ and $\eta$ on an arbitrary von Neumann algebra by better behaved approximations, and value of $H_r^*(\rho \| \eta)$ changes in a controlled way. In principle, this is only half of what we need - we also need to know that $B_r(\rho \| \eta)$ is similarly well-behaved. Several of these approximating densities are however given by applying successive conditional expectations starting from the original densities in the crossed product. Therefore, an extremely useful property is as follows:
\begin{lemma} \label{lem:bdp}
    Let $\Phi_* : \M_* \rightarrow \N_*$ be the dual of any unital, completely positive map $\Phi : \N \rightarrow \M$. Then for any states $\rho$ and $\eta$ on $\M_*$ or corresponding densities,
    \begin{equation}
        B_r(\Phi_*(\rho) \| \Phi_*(\eta)) \geq B(\rho \| \eta) \pl.
    \end{equation}
\end{lemma}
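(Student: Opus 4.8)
The plan is to transfer a witnessing sequence of tests from the $\N$-side back to the $\M$-side by applying $\Phi$ copy-wise. First I would record the elementary structural facts: since $\Phi_*$ is a predual map, $\Phi$ is a normal unital completely positive map, and for each $n$ it has a well-defined normal unital completely positive tensor power $\Phi^{\otimes n} : \N^{\bar\otimes n} \rightarrow \M^{\bar\otimes n}$ (the tensor product of normal completely positive maps extends to the von Neumann algebra tensor product). Dualizing, $(\Phi_*)^{\otimes n}$ sends a product state $\omega^{\otimes n}$ to $(\Phi_*(\omega))^{\otimes n}$; equivalently $\omega^{\otimes n}(\Phi^{\otimes n}(x)) = (\Phi_*(\omega))^{\otimes n}(x)$ for all $x \in \N^{\bar\otimes n}$ and $\omega \in \M_*$. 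This last identity is the entire engine of the proof. I would also note in passing that $\Phi_*$ carries states to states, since $\Phi(\id) = \id$ and $\Phi$ is positive, so $B_r(\Phi_*(\rho)\|\Phi_*(\eta))$ is meaningful.

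Next I would run the transference. Fix $R > B_r(\Phi_*(\rho) \| \Phi_*(\eta))$ and pick, via the definition \eqref{eq:Br}, tests $0 \le T_n \le 1$ in $\N^{\bar\otimes n}$ with $\limsup_n \frac{1}{n}\log \Phi_*(\eta)^{\otimes n}(T_n) \le -r$ and $\liminf_n \frac{1}{n}\log \Phi_*(\rho)^{\otimes n}(T_n) \ge -R$. Set $S_n := \Phi^{\otimes n}(T_n)$. Positivity and unitality of $\Phi^{\otimes n}$ give $0 \le S_n \le \Phi^{\otimes n}(\id) = \id$, so $S_n$ is a legitimate test in $\M^{\bar\otimes n}$, and the dual identity yields $\eta^{\otimes n}(S_n) = \Phi_*(\eta)^{\otimes n}(T_n)$ and $\rho^{\otimes n}(S_n) = \Phi_*(\rho)^{\otimes n}(T_n)$ for every $n$ — the two quantities controlling $B_r$ are literally unchanged. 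Hence $\{S_n\}$ witnesses $R$ for the pair $(\rho,\eta)$, so $B_r(\rho\|\eta) \le R$; letting $R \downarrow B_r(\Phi_*(\rho)\|\Phi_*(\eta))$ finishes it. One could equivalently phrase this through $\hat B_r$ of Remark \ref{rem:br-lim-sup}, using the ``for all sufficiently large $n$'' form of the defining inequalities, but it is not necessary.

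There is no serious obstacle here; the statement is a one-line transference once the right objects are in place, and in particular no order hypothesis such as \eqref{eq:ordassump} is needed, since the argument only manipulates the defining infimum of $B_r$ and never its finiteness. The only points demanding (routine) care are verifying that $\Phi^{\otimes n}$ exists as a normal unital completely positive map on the von Neumann algebra tensor product rather than merely on the algebraic or minimal C$^*$-tensor product, and checking the product-state identity for $(\Phi_*)^{\otimes n}$ on dense elementary tensors before extending by normality. I would also remark that the same computation gives the analogous monotonicity for $\hat B_r$, which is the form actually invoked when $\Phi$ is the inclusion $\M_n \hookrightarrow \M \rtimes G$ dual to a conditional expectation of the Haagerup reduction, so that successive such approximations can only increase $B_r$.
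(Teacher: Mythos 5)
Your proposal is correct and follows essentially the same route as the paper: pull the witnessing tests $T_n$ back via $\Phi$ applied copy-wise, note that $0 \le \Phi^{\otimes n}(T_n) \le \id$ by unitality and positivity, and observe that $\rho^{\otimes n}(\Phi^{\otimes n}(T_n)) = \Phi_*(\rho)^{\otimes n}(T_n)$ and likewise for $\eta$, so the same rate is witnessed for $(\rho,\eta)$. Your write-up is in fact slightly more careful than the paper's (which writes $\Phi(T_n)$ rather than $\Phi^{\otimes n}(T_n)$ and tacitly assumes the infimum is attained), but these are presentational, not substantive, differences.
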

\begin{proof}
    Recall in as in Equation \eqref{eq:Br} that $B_r(\rho \| \eta)$ is the infimum over $R$ for which there exists a sequence of tests satisfying some conditions. If $B_r(\Phi_*(\rho) \| \Phi_*(\eta)) = R_0$, then there exists a sequence of tests $(T_n)_{n=1}^\infty$ satisfying the noted conditions for $\Phi_*(\rho)^{\otimes n}(T_n)$ and $\Phi_*(\eta)^{\otimes n}(T_n)$ in the large $n$ limits. If we take a new family of tests given by $\tilde{T}_n := \Phi(T_n)$, then $\Phi_*(\rho)^{\otimes n}(T_n) = \rho^{\otimes n}(\Phi(T_n))$, and $\Phi_*(\eta)^{\otimes n}(T_n) = \eta^{\otimes n}(\Phi(T_n))$. Therefore, there exists a sequence of tests achieving $R_0$ in the infimum within $B(\rho \| \eta)$.
\end{proof}
The broader intuition for Lemma \ref{lem:bdp} is that $B_r$ satisfies a sort of reverse data processing inequality. Since it is a converse rate, degrading of the input states increases the value of $B_r$. Another useful Lemma handles prodessing of one argument:
\begin{lemma} \label{lem:one-replace}
    Assume that $\eta \leq e^s \tilde{\eta}$ for some states $\eta, \tilde{\eta}$. Then for every state $\rho$ satisfying Equation \eqref{eq:ordassump},
    \begin{equation*}
        B_r(\rho \| \eta) \leq B_r(\rho \| \tilde{\eta}) + s
        \text{ , and }
        B_{r-s}(\rho \| \eta) \leq e^s B_r(\rho \| \tilde{\eta}) \pl.
    \end{equation*}
\end{lemma}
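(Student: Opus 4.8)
The plan is to exploit two elementary facts: the order relation $\eta \le e^{s}\tilde\eta$ tensorizes, and a test may be rescaled by a positive scalar, which trades the type~I exponent against the type~II threshold linearly. First I would record the preliminaries. Evaluating $\eta \le e^{s}\tilde\eta$ at $\id$ and using that $\eta,\tilde\eta$ are states gives $1 \le e^{s}$, so $s \ge 0$. Next, from $X \le Y$ and $C \ge 0$ one has $X\otimes C \le Y \otimes C$, and iterating $n$ times yields $\eta^{\otimes n} \le e^{sn}\tilde\eta^{\otimes n}$, hence $\eta^{\otimes n}(T) \le e^{sn}\tilde\eta^{\otimes n}(T)$ for every $T \ge 0$. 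Finally, I would work with the reformulation of $B_r$ from Remark~\ref{rem:br-lim-sup}: a value $R$ is admissible for $B_r(\rho\|\cdot)$ precisely when there exist $n_0$ and tests $(T_n)_{n\ge n_0}$ with $\cdot^{\otimes n}(T_n) \le e^{-rn}$ and $\rho^{\otimes n}(T_n) \ge e^{-Rn}$ for all $n \ge n_0$; this coincides with the $\limsup$ definition in \eqref{eq:Br}.

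For the first inequality, fix $R > B_r(\rho\|\tilde\eta)$ and a witnessing family $(T_n)_{n\ge n_0}$ for $(\rho\|\tilde\eta)$ at threshold $r$. Set $\tilde T_n := e^{-sn} T_n$; since $s \ge 0$ we have $0 \le \tilde T_n \le T_n \le \id$, so $\tilde T_n$ is a legitimate test. Then $\eta^{\otimes n}(\tilde T_n) = e^{-sn}\eta^{\otimes n}(T_n) \le \tilde\eta^{\otimes n}(T_n) \le e^{-rn}$, while $\rho^{\otimes n}(\tilde T_n) = e^{-sn}\rho^{\otimes n}(T_n) \ge e^{-(R+s)n}$. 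Hence $(\tilde T_n)$ witnesses $B_r(\rho\|\eta) \le R+s$, and letting $R \downarrow B_r(\rho\|\tilde\eta)$ gives $B_r(\rho\|\eta) \le B_r(\rho\|\tilde\eta) + s$.

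For the second inequality, keep the \emph{same} family $(T_n)$ without rescaling: now $\eta^{\otimes n}(T_n) \le e^{sn}\tilde\eta^{\otimes n}(T_n) \le e^{(s-r)n} = e^{-(r-s)n}$ and $\rho^{\otimes n}(T_n) \ge e^{-Rn}$, so $(T_n)$ witnesses $B_{r-s}(\rho\|\eta) \le R$; taking $R \downarrow B_r(\rho\|\tilde\eta)$ yields $B_{r-s}(\rho\|\eta) \le B_r(\rho\|\tilde\eta)$. Since $s \ge 0$ and $B_r(\rho\|\tilde\eta) \ge 0$ (every test has $\rho^{\otimes n}(T_n) \le \rho^{\otimes n}(\id) = 1$, forcing the exponent to be nonnegative), this is at most $e^{s} B_r(\rho\|\tilde\eta)$, which is the stated bound. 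There is no real obstacle here; the only things to watch are that the rescaled operator remains a test, which is exactly where $s \ge 0$ is used, and the bookkeeping of the eventual-bound formulation via Remark~\ref{rem:br-lim-sup}. Equation~\eqref{eq:ordassump} is not needed beyond ensuring the quantities involved are finite.
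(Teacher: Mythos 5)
Your proof is correct and follows essentially the same route as the paper: rescale the witnessing tests by $e^{-sn}$ to trade the order relation $\eta \le e^{s}\tilde\eta$ for an $s$-loss in the type~I exponent, and reuse the unrescaled tests to get the (stronger) bound $B_{r-s}(\rho\|\eta)\le B_r(\rho\|\tilde\eta)$, which dominates the stated $e^{s}B_r(\rho\|\tilde\eta)$ since $s\ge 0$ and $B_r\ge 0$. Your write-up is in fact a bit more careful than the paper's (checking $s\ge0$, tensorization, and that the rescaled operator is still a test), but it is not a different argument.
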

\begin{proof}
    Recall Equation \eqref{eq:Br},
    \[ B_r(\rho \| \tilde{\eta}) := \inf \Big \{ R | \exists \{T_n\}_{n=1}^\infty, 0 \leq T_n \leq 1
    | \limsup_{n \rightarrow \infty} \tilde{\eta}^{\otimes n} (T_n) \leq e^{- r n}, 
      \liminf_{n \rightarrow \infty} \rho^{\otimes n} (T_n) \geq e^{-R n} \} \pl . \]
    Let $(T_n)_{n=1}^\infty$ be a family of tests achieving 
    \[ 
     \limsup_{n \rightarrow \infty} \tilde{\eta}^{\otimes n} (T_n) \leq e^{- r n} \text{ , and }
      \liminf_{n \rightarrow \infty} \rho^{\otimes n} (T_n) \geq e^{-(R + \epsilon) n}
    \]
    for $\epsilon > 0$, which can be made arbitrarily small. Since $\eta \leq e^s \tilde{\eta}$,
    \[ 
        \limsup_{n \rightarrow \infty} \eta^{\otimes n} (T_n)
        \leq \limsup_{n \rightarrow \infty} e^{n s} \tilde{\eta}^{\otimes n} (T_n)
        \leq e^{- (r-s) n} \pl.
    \]
    We therefore define $\tilde{T}_n := e^{- s n} T_n$ for each $n$, which achieves
    \begin{equation} \label{eq:s-bound-1}
        \limsup_{n \rightarrow \infty} \eta^{\otimes n} (\tilde{T}_n)
        \leq e^{- (r-s) n} \pl.
    \end{equation}
    Moreover,
    \[
        \liminf_{n \rightarrow \infty} \rho^{\otimes n} (\tilde{T}_n)
        \geq e^{-(R + \epsilon + s) n} \pl.
    \]
    Since we may take $\epsilon$ arbitrarily small, we find for every $\epsilon > 0$ and achieving sequence of tests such that
    \[
        B_r(\rho \| \tilde{\eta}) \leq B_r(\rho \| \tilde{\eta}) + s \pl.
    \]
    Alternatively, returning to Equation \eqref{eq:s-bound-1}, we could use the original sequence of tests $(T_n)$ to achieve
    \[
        B_{r - s} (\rho \| \tilde{\eta}) \leq B_r(\rho \| \tilde{\eta}) \pl.
    \]
\end{proof}
It is also worth noting the ``easy'' direction of Theorem \ref{thm:main}:
\begin{lemma} \label{lem:converse}
    For any states $\rho, \eta \in \M_*$ on von Neumann algebra $\M$ and any $n \in \NN$,
    \[ B_r(\rho \| \eta) 
    \geq H_r^*(\rho \| \eta) \pl. \]
\end{lemma}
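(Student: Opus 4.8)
The plan is to carry out the standard ``strong converse'' argument in operator form: I would reduce an arbitrary asymptotically achieving family of tests to a family of \emph{binary, classical} hypothesis tests via the data processing inequality for the sandwiched R\'enyi divergence, and then estimate the resulting binary divergence directly. Fix $R > B_r(\rho\|\eta)$; by Remark~\ref{rem:br-lim-sup} we may use the description $B_r = \hat B_r$, so there exist tests $(T_n)_{n=1}^\infty$ with $0\le T_n\le \id^{\otimes n}$ and an index $n_0$ such that, for every fixed $\eps > 0$ and all $n\ge n_0$, $\eta^{\otimes n}(T_n)\le e^{-(r-\eps)n}$ and $\rho^{\otimes n}(T_n)\ge e^{-(R+\eps)n}$. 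For each $n$ let $\Phi_n\colon \CC^2 \to \M^{\otimes n}$ be the unital completely positive map $(a,b)\mapsto aT_n + b(\id^{\otimes n}-T_n)$; its predual $(\Phi_n)_*$ is a channel sending a normal state $\omega^{\otimes n}$ to the probability vector $(\omega^{\otimes n}(T_n),\,1-\omega^{\otimes n}(T_n))$ on $\ell_\infty^2$.

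The heart of the proof is the following chain, valid for each $\alpha > 1$. By the data processing inequality for $D_\alpha^*$ in arbitrary von Neumann algebras \cite{jencova_renyi_2018} applied to $\Phi_n$, together with additivity of $D_\alpha^*$ on product states,
\[
  n\, D_\alpha^*(\rho\|\eta) = D_\alpha^*(\rho^{\otimes n}\|\eta^{\otimes n}) \ \ge\ D_\alpha^*\big( (\rho^{\otimes n}(T_n),\,\cdot\,) \,\|\, (\eta^{\otimes n}(T_n),\,\cdot\,) \big),
\]
and the right-hand side is an elementary binary R\'enyi divergence satisfying
\[
  D_\alpha^*\big( (p_1,1-p_1) \,\|\, (q_1,1-q_1) \big) = \frac{1}{\alpha-1}\log\big( p_1^{\alpha} q_1^{1-\alpha} + (1-p_1)^{\alpha}(1-q_1)^{1-\alpha} \big) \ \ge\ \frac{\alpha}{\alpha-1}\log p_1 - \log q_1 ,
\]
because $\frac{1}{\alpha-1}>0$ and the omitted summand is nonnegative. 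Substituting $p_1 = \rho^{\otimes n}(T_n)$, $q_1 = \eta^{\otimes n}(T_n)$ and using $-\log q_1 \ge (r-\eps)n$, I would rearrange this to
\[
  -\frac{1}{n}\log \rho^{\otimes n}(T_n) \ \ge\ \frac{\alpha-1}{\alpha}\big( (r-\eps) - D_\alpha^*(\rho\|\eta) \big),
\]
which is vacuous when $D_\alpha^*(\rho\|\eta) = +\infty$ and trivial when $\rho^{\otimes n}(T_n)=0$, so no hypothesis beyond the bare statement is needed.

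To finish, since $-\frac{1}{n}\log\rho^{\otimes n}(T_n)\le R+\eps$ for $n\ge n_0$, letting $\eps\downarrow 0$ gives $R \ge \frac{\alpha-1}{\alpha}(r - D_\alpha^*(\rho\|\eta))$ for every $\alpha>1$, hence $R\ge H_r^*(\rho\|\eta)$ by \eqref{eq:Hr}; taking the infimum over admissible $R$ yields $B_r(\rho\|\eta)\ge H_r^*(\rho\|\eta)$. (Neither side depends on $n$; the same computation applied to $n$-copy states reproduces the bound using $H_{nr}^*(\rho^{\otimes n}\|\eta^{\otimes n}) = n H_r^*(\rho\|\eta)$ and additivity of $D_\alpha^*$, so the stated index is immaterial. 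This direction also follows from the simplified approach of \cite{cheng_strong_2024}.) I do not anticipate a real obstacle -- this is deliberately the ``easy'' half of Theorem~\ref{thm:main} -- the only points requiring attention are that data processing and tensor additivity for $D_\alpha^*$ genuinely hold outside the hyperfinite case, which they do by the cited literature, and the routine $\limsup$/$\liminf$ bookkeeping built into the definition of $B_r$.
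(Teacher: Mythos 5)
Your proposal is correct and follows essentially the same route as the paper: the paper's proof simply defers to Mosonyi and Ogawa's \cite[Lemma 4.7]{mosonyi_quantum_2015}, noting that its two ingredients -- monotonicity of $D_\alpha^*$ under the two-outcome measurement (now subsumed by the general data processing inequality of \cite{jencova_renyi_2018}) and the elementary binary R\'enyi estimate -- carry over verbatim to general von Neumann algebras, which is exactly the argument you have written out in full. The only difference is that you make the cited argument self-contained, including the routine $\limsup/\liminf$ bookkeeping via Remark~\ref{rem:br-lim-sup}, which is harmless since the derived bound is independent of $n$.
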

\begin{proof}
    Within finite dimension, this result essentially follows the proof of Mosonyi and Ogawa's \cite[Lemma 4.7]{mosonyi_quantum_2015}. To adapt that proof, we first note that expressions of the form $\tr(\rho^{\otimes n} T_n)$ can be trivially replaced by $\rho^{\otimes n}(T_n)$. Second, and more substantially, their proof uses \cite[Lemma 3.3]{mosonyi_quantum_2015}, which shows monotonicty of sandwiched R\'enyi relative entropy under measurements. It is now well-known that R\'enyi relative entropy obeys the data processing inequality \cite{jencova_renyi_2018}, subsuming monotonicity under measurement. Hence the result holds unmodified in the general von Neumann algebra setting.
\end{proof}
The final step toward Theorem \ref{thm:main} is to show the ``hard'' direction.
\begin{lemma} \label{lem:testseq}
    Let $\rho, \eta$ be states on $\M_*$ satisfying Equation \eqref{eq:ordassump}. Then
    \[ B_r(\rho \| \eta) \leq H_r^*(\rho \| \eta) \pl. \]
\end{lemma}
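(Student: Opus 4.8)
The plan is to run the approximations of Subsections \ref{sec:approx-finite-alg}--\ref{sec:approx-finite-spec} in sequence, at each stage replacing $(\M,\rho,\eta)$ by a more tractable model in which $H_r^*$ moves by a controllable amount while $B_r$ only increases (Lemma \ref{lem:bdp}) or changes in a controlled way (Lemmas \ref{lem:one-replace} and \ref{lem:subseq}), terminating in a commutative --- hence injective --- algebra where Theorem \ref{thm:origres} and Proposition \ref{prop:Br-alpha} give $B_r = H_r^*$ outright. Fix $\epsilon>0$. First, by Lemma \ref{lem:compact1}, choose $\alpha_0<\infty$ depending only on $r$ and the order constant $a$ such that $H_r^*(\rho\|\eta)\le H^{*\leq\alpha_0}_r(\rho\|\eta)+\epsilon$ (and note $H^{*\leq\alpha_0}_r\le H_r^*$ always). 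Since \eqref{eq:ordassump} is inherited, with only mildly worse constant, by all the conditional-expectation images and the finite-spectrum replacement below, this \emph{same} $\alpha_0$ governs the tail of the Hoeffding supremum for every approximating pair that appears. Next, apply the Haagerup reduction: by Lemma \ref{lem:haageruphoeffding} there is a finite (type $II_1$) algebra $\M_N\subseteq\M\rtimes G$, with bounded density $\delta<d_{\hat\eta}<\delta^{-1}$ by Theorem \ref{thm:haagerup}(iv), with $|H^{*\leq\alpha_0}_r(\E_N(\hat\rho)\|\hat\eta)-H^{*\leq\alpha_0}_r(\rho\|\eta)|\le\epsilon$; and applying Lemma \ref{lem:bdp} twice --- once to the conditional expectation $\M\rtimes G\to\M$ and once to the pinching $\E_N$ --- gives $B_r(\rho\|\eta)=B_r(\hat\rho\|\hat\eta)\le B_r(\E_N(\hat\rho)\|\hat\eta)$.

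Now work inside $\M_N$. By Lemma \ref{lem:finitespec}, replace $\hat\eta$ by the state $\eta'$ with density $F_k(d_{\hat\eta})$ of finite spectrum of size $K\le 2k\delta^{-2}$, so that $\hat\eta\le(1+1/k)\eta'$ and $|H_r^*(\E_N(\hat\rho)\|\hat\eta)-H_r^*(\E_N(\hat\rho)\|\eta')|\le 1/k$; then Lemma \ref{lem:one-replace} gives $B_r(\E_N(\hat\rho)\|\hat\eta)\le B_r(\E_N(\hat\rho)\|\eta')+\log(1+1/k)$. Pass to the method of types: Lemma \ref{lem:subseq} yields $B_r(\E_N(\hat\rho)\|\eta')\le \tfrac1n B_{nr}(\E_N(\hat\rho)^{\ten n}\|(\eta')^{\ten n})$, and letting $F_n$ be the pinching onto the commutant $\braket{d_{\eta'}^{\ten n}}$, Lemma \ref{lem:types} bounds its cp-order index by $(n+1)^K$, so $\E_N(\hat\rho)^{\ten n}\le(n+1)^K F_n(\E_N(\hat\rho)^{\ten n})$. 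The two commuting self-adjoint operators $F_n(\E_N(\hat\rho)^{\ten n})$ and $d_{\eta'}^{\ten n}$ generate a \emph{commutative} (hence hyperfinite) von Neumann algebra $\C\subseteq\M_N^{\ten n}$ in which they are the densities of two states. The key structural point is that any test in $\C^{\ten m}$ lies in the range of $F_n^{\ten m}$, so --- conditional expectations being trace-self-adjoint --- it produces exactly the same type I and type II error probabilities whether evaluated against $\E_N(\hat\rho)^{\ten n},(\eta')^{\ten n}$ or against $F_n(\E_N(\hat\rho)^{\ten n}),d_{\eta'}^{\ten n}$; hence $B_{nr}(\E_N(\hat\rho)^{\ten n}\|(\eta')^{\ten n})\le B_{nr}^{\C}(F_n(\E_N(\hat\rho)^{\ten n})\|d_{\eta'}^{\ten n})$, where the latter may be computed inside the commutative $\C$.

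Applying Theorem \ref{thm:origres} together with Proposition \ref{prop:Br-alpha} in $\C$ (the order assumption there forces $D^*_{\alpha_0}<\infty$), this last quantity equals $H_{nr}^*(F_n(\E_N(\hat\rho)^{\ten n})\|d_{\eta'}^{\ten n})$, which by Lemma \ref{lem:types}(iii) and additivity of $H_r^*$ on tensor products is at most $n\,H_r^*(\E_N(\hat\rho)\|\eta')+K\log(n+1)$. Chaining everything and dividing through by $n$,
\[
B_r(\rho\|\eta)\ \le\ H_r^*(\rho\|\eta)\ +\ 2\epsilon\ +\ \log(1+\tfrac1k)\ +\ \tfrac1k\ +\ K\,\tfrac{\log(n+1)}{n},
\]
where I used in the last steps $H_r^*(\E_N(\hat\rho)\|\hat\eta)\le H^{*\leq\alpha_0}_r(\E_N(\hat\rho)\|\hat\eta)+\epsilon$ (Lemma \ref{lem:compact1} with the uniform $\alpha_0$) followed by Lemma \ref{lem:haageruphoeffding}. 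Choosing first $k$ large enough that $\log(1+1/k)+1/k\le\epsilon$, then (with $k$, hence $K$, fixed) $n$ large enough that $K\log(n+1)/n\le\epsilon$, the right side is at most $H_r^*(\rho\|\eta)+4\epsilon$; letting $\epsilon\to0$ gives $B_r(\rho\|\eta)\le H_r^*(\rho\|\eta)$.

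The main obstacle is the interface between the two distinct limiting procedures --- the Haagerup index $N$ and the method-of-types copy index $n$, the latter itself nested inside the $m\to\infty$ limit implicit in $B_r$ --- together with the fact that $H_r^*$ and Theorem \ref{thm:origres} involve $\sup_{\alpha>1}$ including $\alpha=\infty$ (in the regime $r\ge D_\infty^*$), whereas the Haagerup approximation of $Q_\alpha^*$ in Lemma \ref{lem:haagerupconv} is uniform only on compact $\alpha$-intervals. The resolution is that under \eqref{eq:ordassump} with a fixed constant, Lemma \ref{lem:compact1} truncates that supremum to one fixed finite $\alpha_0$ uniformly across all the approximating pairs; the remaining work is bookkeeping of the five error terms and the observation that pinching $\E_N(\hat\rho)^{\ten n}$ against the eigenspaces of $\eta'^{\ten n}$ makes the pair genuinely commutative at only subexponential cost.
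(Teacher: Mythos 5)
Your proposal is correct and follows essentially the same route as the paper's own proof: truncate the Hoeffding supremum via Lemma \ref{lem:compact1}, pass through the Haagerup reduction (Lemmas \ref{lem:haageruphoeffding}, \ref{lem:bdp}), replace $\eta$ by a finite-spectrum density (Lemmas \ref{lem:finitespec}, \ref{lem:one-replace}), pinch via the method of types (Lemmas \ref{lem:types}, \ref{lem:subseq}), and invoke Theorem \ref{thm:origres} with Proposition \ref{prop:Br-alpha} in the resulting commutative algebra before chaining the error terms. The only deviations --- applying Lemma \ref{lem:subseq} before rather than after the pinching, and justifying the transfer of tests from the commutative algebra directly (trace-self-adjointness of the conditional expectation) instead of citing Lemma \ref{lem:bdp} with $\eta$-invariance --- are cosmetic.
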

\begin{proof}
    By Lemma \ref{lem:compact1}, for any $r \in \RR$, and states $\rho, \eta \in \M_*$, and any $\epsilon_1 > 0$, there exists an $\alpha_0$ sufficiently large that
    \begin{equation} \label{eq:compactuse}
        |H^{* \leq \alpha_0}_r(\rho \| \eta) - H^{*}_r(\rho \| \eta) | \leq \epsilon_1 \pl.
    \end{equation}
    By Lemma \ref{lem:haageruphoeffding}, there exists a sequence of finite von Neumann algebras $(\M_k \subseteq \M \rtimes G)_{k=1}^\infty$ with respective restriction maps $(\E_k)$ such that for any $\epsilon_1$, there exists some $k_0$ that for all $k \geq k_0$,
    \begin{equation*} 
        |H_r^{* \leq \alpha_0}(\rho \| \eta) - H_r^{* \leq \alpha_0}(\E_k(\iota(\rho)) \| \iota(\eta))| \leq \epsilon_2 \pl.
    \end{equation*}
    Combining with Equation \eqref{eq:compactuse},
    \begin{equation*} 
        |H_r^{*}(\rho \| \eta) - H_r^{*}(\E_k(\iota(\rho)) \| \iota(\eta))| \leq \epsilon_2 + 2 \epsilon_1 
    \end{equation*}
    for sufficiently large $k$ and $\alpha_0$. Clearly $\E_k(\iota(\rho)) \in (\M_k)_*$, and since $\iota(\eta)$ is invariant under $\E_k$, $\iota(\eta) = \E_k(\iota(\eta)) \in (\M_k)_*$. Let $d_\rho^{(k)}$ denote the density of $\E_k(\iota(\rho))$ with respect to the trace $\tau_k$ on the $k$th von Neumann algebra in sequence. The Haagerup reduction, Theorem \ref{thm:haagerup}, also yields for each $k$ a density $d_\eta^{(k)}$ and $\delta_k \in \RR^+$ for which $\delta_k \leq d_\eta^{(k)} \leq \delta_k^{-1}$, for which
    \begin{equation*}
        |H_r^{*}(\rho \| \eta) - H_r^{*}(d_\rho^{(k)} \| d_\eta^{(k)})| \leq \epsilon_2 + 2 \epsilon_1
    \end{equation*}
    By Lemma \ref{lem:finitespec}, there exists a sequence of conditional expectations $(F_{k,l})_{l=1}^\infty$ such that for each $l$,
    \begin{equation*} 
        |H_r^*(d_{\rho}^{(k)} \| d_\eta^{(k)})-H_r^*(d_{\rho}^{(k)} \|F_{k,l}(d_\eta^{(k)}))|\kl \frac{1}{l} \pl ,
    \end{equation*}
    and each $|\text{spec}(F_{k, l}(d_\eta^{(k)}))|\kl 2 l \delta_k^{-2}$. Let $d_{\eta, k, l} := F_{k,l}(d_\eta^{(k)})$. Let $\tilde{F}_{k,l,n}$ be the dual of the conditional expectation to the commutant algebra of $d_{\eta, k, l}^{\otimes n}$
    By Lemma \ref{lem:types}, for every $n \in \NN$,
    \begin{equation*} 
        H_r^*(d_{\rho}^{(k)}\| d_{\eta, k, l}) \leq \frac{1}{n} H_{r n}^*(\tilde{F}_{k,l,n}(d_{\rho}^{(k) \otimes n}) \| d_{\eta, k, l}^{\otimes n})
        \leq H_r^*(d_\rho^{(k)} \| d_{\eta, k, l}) + K \frac{\log(n+1)}{n} \pl.
    \end{equation*}
    Therefore, for any $\epsilon > 0$, choosing sufficiently large $n$, $l$, $k$, and $\alpha_0$,
    \begin{equation} \label{eq:H-approx}
        \Big | \frac{1}{n} H_{r n}^*(\tilde{F}_{k,l,n}(d_{\rho}^{(k) \otimes n}) \| d_{\eta, k, l}^{\otimes n}) - H^{*}_r(\rho \| \eta) \Big | \leq \epsilon \pl.
    \end{equation}
    
    Because $\tilde{F}_{k,l,n}(d_{\rho}^{(k) \otimes n})$ and $d_{\eta, k, l}^{\otimes n}$ commute, they are contained within the predual of a commuting algebra. That algebra is automatically hyperfinite. Therefore, Hiai and Mosonyi's Theorem \ref{thm:origres} applies, yielding for each $k, l$, $\delta$, and $n$ that
    Via Proposition \ref{prop:Br-alpha},
    \begin{equation*}
        B_r(\tilde{F}_{k,l,n}(d_{\rho}^{(k) \otimes n}) \| d_{\eta, k, l}^{\otimes n}) )
        = H_{r n}^*(\tilde{F}_{k,l,n}(d_{\rho}^{(k) \otimes n}) \| d_{\eta, k, l}^{\otimes n}) ) \pl.
    \end{equation*}
    The definition of $B_r$ as in Equation \eqref{eq:Br} then implies that there exists a sequence of tests $(T_{(k,l,n),m})_{m=1}^\infty$ such that for each $m$, $d_{\eta, k, l}^{\otimes n}(T_{(k,l,n),m}) \leq e^{- m n r}$, and
    \begin{equation*} 
        - \lim_{m \rightarrow \infty} \tilde{F}_{k,l,n}(d_{\rho}^{(k) \otimes n})^m(T_{(k,l,n),m}) = H_{r n}^*(\tilde{F}_{k,l,n}(d_{\rho}^{(k) \otimes n}) \| d_{\eta, k, l}^{\otimes n}) \pl.
    \end{equation*}
    Since $\tilde{F}_{k,l,n}(d_{\rho}^{(k) \otimes n}(\rho)$ applies a conditional expectation under which $\eta$ is invariant,
    \begin{equation}
        B_r(\tilde{F}_{k,l,n}(d_{\rho}^{(k) \otimes n}) \| d_{\eta, k, l}^{\otimes n}))
            \geq B_r(d_{\rho}^{(k) \otimes n} \| d_{\eta, k, l}^{\otimes n})) \pl.
    \end{equation}
    Then by Lemma \ref{lem:subseq},
    \begin{equation}
        B_r(d_{\rho}^{(k)} \| d_{\eta, k, l})) 
            \leq  H_{r n}^*(\tilde{F}_{k,l,n}(d_{\rho}^{(k)}) \| d_{\eta, k, l}) ) \pl.
    \end{equation}
    The next step should undo the replacement $d_\eta^{(k)} \rightarrow d_{\eta, k, l} = F_{k,l}(d_\eta)$ on the ``$B_r$'' side. This follows from recalling that by Lemma \ref{lem:finitespec}, $d_\eta^{(k)} \leq (1+1/k) F_{k,l}(d_\eta^{(k)}) \leq e^{1/k} F_{k,l}(d_\eta^{(k)})$. Then by Lemma \ref{lem:one-replace},
    \begin{equation*}
        B_r(d_{\rho}^{(k)} \| d_{\eta}^{(k)})) \leq B_r(d_{\rho}^{(k)} \| d_{\eta, k, l})) + \frac{1}{k} \pl.
    \end{equation*}
    We may then replace $d_\eta^{(k)}$ be $d_\eta$ and, noting that $d_\eta = \E_k(d_\eta)$, again apply Lemma \ref{lem:bdp} to return from the $k$th finite algebra to the crossed product. Lemma \ref{lem:bdp} applies a final time to the map from the original state space into that of the crossed product. Ultimately,
    \begin{equation*}
        B_r(\rho \| \eta) \leq B_r(\tilde{F}_{k,l,n}(d_{\rho}^{(k) \otimes n}) \| d_{\eta, k, l}^{\otimes n}) ) + \frac{1}{k} \pl,
    \end{equation*}
    and with Equation \eqref{eq:H-approx},
    \begin{equation}
        B_r(\rho \| \eta) \leq  H^{*}_r(\rho \| \eta) + \epsilon + \frac{1}{k} \pl,
    \end{equation}
    where $k$ can be made arbitrarily large and $\epsilon$ arbitrarily small.
\end{proof}
\begin{proof}[Proof of Theorem \ref{thm:main}]
    The Theorem follows from combining Lemma \ref{lem:testseq} with Lemma \ref{lem:converse}.
\end{proof}

\subsection{Cutoff Rates}
As noted in \cite[Section 4.2]{mosonyi_strong_2023}, the Hoeffding anti-divergence gives operational interpretation to an optimized quantity over sandwiched $\alpha$-R\'enyi divergences, not necessarily to the divergences themselves. Hence to complete this interpretation, earlier works \cite{mosonyi_quantum_2015, mosonyi_strong_2023, hiai_quantum_2023} recall the generalized $\kappa$-cutoff rate
\begin{equation} \label{eq:cutoff}
    C_{\kappa}(\rho \| \eta) = \inf \{ r_0 | B_r(\rho \| \eta) \geq \kappa (r - r_0) \forall r > 0 \}
\end{equation}
for each $\kappa > 0$. This cutoff rate was originally introduced in \cite{csiszar_generalized_1995}. We use Proposition \ref{prop:Br-alpha} to match out Equation \eqref{eq:cutoff} with \cite[Equation (95)]{mosonyi_quantum_2015}. We hereby show that \cite[Theorem 4.18]{mosonyi_quantum_2015} generalizes to the von Neumann algebra settting:
\begin{cor} \label{cor:cutoff}
    Assume Equation \eqref{eq:ordassump} for states $\rho$ and $\eta$ on a von Neumann algebra. Then
    \begin{equation}
        D^*_\alpha(\rho \| \eta) = C_{(\alpha - 1)/\alpha}(\rho \| \eta) \pl.
    \end{equation}
\end{cor}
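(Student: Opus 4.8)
The plan is to reduce the statement, via Proposition \ref{prop:Br-alpha} and Theorem \ref{thm:main}, to the Fenchel--Legendre duality between $H_r^*$ and $D_\alpha^*$ that underlies Mosonyi and Ogawa's \cite[Theorem 4.18]{mosonyi_quantum_2015}, and then check that the non-finite-dimensional ingredients are already available. First I would note that the cutoff rate \eqref{eq:cutoff}, as defined here through $B_r$, agrees with the Mosonyi--Ogawa formulation in terms of $1-\alpha^*_{e^{-nr}}(\rho^{\otimes n}\|\eta^{\otimes n})$: this is immediate from Proposition \ref{prop:Br-alpha} together with Remark \ref{rem:br-lim-sup}, which reconcile the $\limsup$/$\liminf$ conventions. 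Rewriting the constraint ``$B_r(\rho\|\eta)\geq\kappa(r-r_0)$ for all $r>0$'' as ``$\kappa r_0\geq\kappa r-B_r(\rho\|\eta)$ for all $r>0$'' and taking the supremum over $r>0$ on the right gives
\[
    C_\kappa(\rho\|\eta)=\frac{1}{\kappa}\sup_{r>0}\bigl(\kappa r-B_r(\rho\|\eta)\bigr)=\frac{1}{\kappa}\sup_{r>0}\bigl(\kappa r-H_r^*(\rho\|\eta)\bigr),
\]
the last equality being Theorem \ref{thm:main}.

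Next I would recognize $H_r^*$ as a Fenchel conjugate. Put $s:=(\alpha-1)/\alpha\in(0,1)$, so that $1/(1-s)=\alpha$, and set $\psi(s):=\ln Q^*_{1/(1-s)}(\rho\|\eta)$. Using Remark \ref{rem:hoeffdingequiv} and $D_\alpha^*=\tfrac{\alpha}{\alpha-1}\ln Q_\alpha^*$, one has $H_r^*(\rho\|\eta)=\sup_{s\in(0,1)}\bigl(sr-\psi(s)\bigr)$. The function $\psi$ is nonnegative with $\psi(0)=0$; it is nondecreasing because $Q_\alpha^*$ is nondecreasing in $\alpha$ (Remark \ref{rem:sandwich}); under Equation \eqref{eq:ordassump} it is finite on all of $[0,1]$, with $\psi(1)=\ln Q^*_\infty(\rho\|\eta)=D_\infty^*(\rho\|\eta)<\infty$ by Lemma \ref{lem:qlim}; and, crucially, it is \emph{convex}. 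This convexity is precisely the log-convexity of $p\mapsto Q_p^*(\rho\|\eta)=\|\eta^{1/2}\nu^*\nu\eta^{1/2}\|_{L_p^{1/2}(\eta)}$ in $1/p$ (which is affine in $s$), and follows from Stein's interpolation theorem (Theorem \ref{thm:stein}) applied to the analytic family carrying the $p$-dependence in the Kosaki norm weighting (cf.\ Remark \ref{rem:sandwich}) --- exactly the estimate already used in the proof of Lemma \ref{lem:haagerupconv}. Hence $\psi$ is proper, convex, and lower semicontinuous; extending it by $+\infty$ off $[0,1]$ and using its continuity there, $H_r^*=\psi^*$.

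Finally I would apply Fenchel--Moreau: $\psi^{**}=\psi$, so $\sup_{r\in\RR}\bigl(\kappa r-\psi^*(r)\bigr)=\psi(\kappa)$. Restricting to $r>0$ changes nothing, since $\psi^*$ is nondecreasing (the relevant $s$ lie in $[0,\infty)$) and $\psi^*(r)\geq-\psi(0)=0$, so $\kappa r-\psi^*(r)\leq0\leq\psi(\kappa)$ for $r\leq0$, while $\sup_{r>0}=\sup_{r\geq0}$ by continuity of $\psi^*$. Therefore
\[
    C_\kappa(\rho\|\eta)=\frac{\psi(\kappa)}{\kappa}=\frac{\alpha}{\alpha-1}\ln Q_\alpha^*(\rho\|\eta)=D_\alpha^*(\rho\|\eta),
\]
with $\kappa=(\alpha-1)/\alpha$. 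The only genuinely non-finite-dimensional input is the convexity of $s\mapsto\ln Q^*_{1/(1-s)}(\rho\|\eta)$, which is the same Stein-interpolation fact already exploited in Lemma \ref{lem:haagerupconv}, so it presents no new obstacle; the remaining care points --- finiteness of $\psi$ at $s=0,1$ (from Equation \eqref{eq:ordassump} and Lemma \ref{lem:qlim}), semicontinuity, the $r>0$ versus $r\in\RR$ distinction in the supremum, and matching the two definitions of the cutoff rate via Proposition \ref{prop:Br-alpha} --- are routine. In this sense, as with Lemma \ref{lem:converse}, the finite-dimensional argument of \cite[Theorem 4.18]{mosonyi_quantum_2015} transfers essentially verbatim once Theorem \ref{thm:main} is in hand.
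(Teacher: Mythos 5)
Your proof is correct and takes essentially the same route as the paper: replace $B_r$ by $H_r^*$ via Theorem \ref{thm:main} and then invert the Legendre-type relation between $H_r^*$ and $D_\alpha^*$ in the definition \eqref{eq:cutoff}. The only difference is that where the paper's proof simply ``observes'' that $D^*_{1/(1-\kappa)}$ is the minimal feasible $r_0$, you make the nontrivial direction $C_\kappa(\rho\|\eta)\geq D_\alpha^*(\rho\|\eta)$ explicit via Fenchel--Moreau biconjugation, using the log-convexity of $p\mapsto Q_p^*(\rho\|\eta)$ in $1/p$ (the same interpolation estimate as in Lemma \ref{lem:haagerupconv}) together with finiteness and continuity of the endpoint values under \eqref{eq:ordassump} and Lemma \ref{lem:qlim} --- a legitimate and welcome filling-in of a step the paper leaves implicit.
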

\begin{proof}
    Using Theorem \ref{thm:main}, we obtain from Equation \eqref{eq:cutoff} that
    \begin{equation}
        C_{\kappa}(\rho \| \eta) = \inf \{ r_0 | H_r^*(\rho \| \eta) \geq \kappa (r - r_0) \pl \forall r > 0 \} \pl.
    \end{equation}
    Recalling Equation \eqref{eq:Hr}
    \begin{equation*}
    H_r^*(\rho \| \eta) = \sup_{\alpha > 1} \frac{\alpha - 1}{\alpha}
        \Big \{ r - D_\alpha^*(\rho \| \eta) \Big \}  \pl.
    \end{equation*}
    Let $\kappa = (\alpha - 1) / \alpha$. Then
    \begin{equation*}
        H_r^*(\rho \| \eta) = \sup_{0 \leq \kappa < 1} \kappa
        \Big \{ r - D_{1/(1-\kappa)}^*(\rho \| \eta) \Big \}  \pl.
    \end{equation*}
    Observe that therefore, $D_{1/(1-\kappa)}^*(\rho \| \eta)$ is the minimum value of $r_0$ for which
    \[ 
        H_r^*(\rho \| \eta) \geq \kappa (r - r_0) \pl \forall r > 0 \pl.
    \]
\end{proof}
\section{Discussion and Outlook}
We expect that the methods of this work generalize to other scenarios in quantum information theory. The following Remark summarizes the main requirements for such a proof:
\begin{rem} \normalfont The key argument in the original proof for connection between R\'enyi entropies and hypothesis testing is the reduction to the commutative scenario where large deviation applies. This has not changed in the type III situation. The application of Haagerup reduction method may apply to comparing other quantities $B_{1,2}(\rho,\eta)$ and under the following circumstances. 
 \begin{enumerate}
     \item[i)] Both quantities are increasing under channels \item[ii)] $B_j(\rho^{\otimes n},\eta^{\otimes n})=nB_j(\rho|\eta)$.
     \item[iii)] $M=\bigcup_k \M_k$ admits conditional expectations $(\E_k)$ such that
        \[ \lim_k B_j(\E_k(\rho),\E_k(\eta))=B_j(\rho|\eta) \pl. \]
     \item[iv)] Both quantities behave continuously under order perturbations.
     \item[v)]  Both quantities are comparable on commuting densities.
\end{enumerate}
\end{rem}
Ultimately, the Haagerup reduction may transfer bounds from finite to properly infinite von Neumann algebras on quantities that are in some sense well-behaved under conditional expectations and order perturbations. Many such quantities abound in quantum information.

\section{Acknowledgments}
The authors acknowledge conversations with Mil\'an Mosonyi as motivating this research.

 \bibliographystyle{unsrt}
\bibliography{refs}

\end{document}